\documentclass[a4paper]{amsart}
\usepackage{amsthm}
\usepackage{amsmath}
\usepackage{amssymb}
\usepackage{tikz}
\usepackage{color}
\usetikzlibrary{matrix}
\usepackage{dsfont}
\usepackage{graphicx} 
\usepackage[margin=1.2 in]{geometry}
\usepackage{stackrel}
\numberwithin{equation}{section}
\newcommand\norm[1]{\left\lVert#1\right\rVert}
\newcommand\normm[1]{\left|#1\right|}

\marginparsep = 10pt
\marginparwidth = 68pt

\title{On steering in the C*-algebraic framework} 
\author{Micha{\l} Banacki}
\address{International Centre for Theory of Quantum Technologies (ICTQT), University of Gda\'{n}sk, Jana Ba\.{z}y\'{n}skiego 1A, 80-309 Gda\'{n}sk, Poland; Institute of Mathematics, Faculty of Mathematics, Physics and Informatics, Wita Stwosza 57, 80-308, Gda\'{n}sk, Poland; Institute for Theoretical Physics, University of Cologne, Höninger Weg 100, 50969, Cologne, Germany}
\email{michal.banacki@ug.edu.pl}


\theoremstyle{plain}
\newtheorem{theorem}{Theorem}[section]

\newtheorem{proposition}[theorem]{Proposition}
\newtheorem{corollary}[theorem]{Corollary}

\theoremstyle{definition}
\newtheorem{definition}[theorem]{Definition} 
 
\theoremstyle{remark}

\newcommand{\beq}{\begin{equation}}
\newcommand{\eeq}{\end{equation}}
\newcommand{\be}{\begin{eqnarray}}
\newcommand{\ee}{\end{eqnarray}}
\newcommand{\beg}{\begin{eqnarray*}}
\newcommand{\eeg}{\end{eqnarray*}}

\begin{document}

\begin{abstract}
We discuss a scenario of bipartite steering with local subsystems of the parties modeled by certain operator algebras. In particular, we formalize the notion of quantum assemblages in a commuting observables paradigm and focus on equivalent descriptions of such objects providing a systematic analysis of previously scattered approaches. We provide necessary and sufficient conditions for the equivalence of quantum commuting and tensor models that is stable under extensions of the trusted subsystem by arbitrary finite-dimensional ancillae. As a result, we show that the gap between two models of bipartite steering can be observed in an arbitrary scenario with two measurement settings
($m = 2$) and more than two outcomes ($k > 2$). We also demonstrate that the identified gap is independent of nonlocality arising from the refutation of Tsirelson’s conjecture. Finally, we provide no-go results concerning the possibility of post-quantum steering in this most general bipartite paradigm and discuss related corollaries regarding free probability and operator system approach as well as a link to Tsirelson's problem.
\end{abstract}
\maketitle
\section{Preliminaries}
Description of physical reality provided on the most fundamental level by the language of quantum mechanics and quantum field theory enables us to predict phenomena unprecedented within classical theories. In particular, the notion of quantum entanglement \cite{4H} gives rise to correlations stronger than those observed for classically correlated systems \cite{EPR,Bell64}.

Indeed, consider a Bell-type scenario with a joint quantum system shared by two separated parties A and B, where both of them perform local measurements choosing (at random) between measurement settings $x=1,\ldots, m$ or $y=1,\ldots, m$ and obtaining outcomes $a=1,\ldots, k$ or $b=1,\ldots, k$ respectively. A probabilistic description of such an experiment is given by a behaviour $P=\left\{p(ab|xy)\right\}_{a,b,x,y}$ formed as a collection of conditional probabilities $p(ab|xy)$ of obtaining outcomes $a,b$ when measuring $x,y$. Depending on the assumed paradigm one defines different sets of permitted correlations. We say that behaviour $P$ belongs to the set of quantum correlations ($P\in \mathcal{C}_{q}(m,k)$) if there exist Hilbert spaces $H_A, H_B$ with $\dim H_A,\dim H_B<\infty$, a unit vector $\psi\in H_A\otimes H_B$ and PVMs elements $P_{a|x}\in B(H_A)$, $Q_{b|y}\in B(H_B)$ such that $p(ab|xy)=\langle\psi,P_{a|x}\otimes Q_{b|y}\psi\rangle$. Similarly, we introduce the set of quantum spatial correlations $\mathcal{C}_{qs}(m,k)$ by relaxing the finite-dimensional conditions for $H_A, H_B$ and by $\overline{\mathcal{C}_{q}(m,k)}$ we denote the closure of the set $\mathcal{C}_{q}(m,k)$ (often described as $\mathcal{C}_{qa}(m,k)$).

Despite the fact that the evoked sets describe correlations obtainable within the principles of quantum mechanics (and in general beyond convex combinations of deterministic behaviours) they exclude a more refined framework inspired by the algebraic quantum field theory \cite{Haag, Haag0}, where one considers a net of C*-algebras (with the common unit) related to different regions of space-time. For that reason, it is necessary to introduce yet another set of correlations, namely we say that $P$ belongs to the set of quantum commuting correlations ($P\in \mathcal{C}_{qc}(m,k)$) if there exist a Hilbert space $H$, a unit vector $\psi\in H$ and PVMs elements $P_{a|x},Q_{b|y}\in B(H)$ such that $ [P_{a|x},Q_{b|y}]=0$ and $p(ab|xy)=\langle\psi,P_{a|x}Q_{b|y}\psi\rangle$. 

Finally, one may consider a bipartite Bell-type scenario described by correlations obeying only no-signaling constraints (without possible underlying quantum state shared by both parties). We say that $P$ belongs to the set of no-signaling correlations ($P\in \mathcal{C}_{ns}(m,k)$) if $\sum_ap(ab|xy)=\sum_ap(ab|x'y)$ for all $b,x,x', y $ and $\sum_bp(ab|xy)=\sum_bp(ab|xy')$ for all $a,x, y , y'$.

It is clear that within the above paradigms, we obtain the following chain of inclusions
$$\mathcal{C}_q(m,k)\subset \mathcal{C}_{qs}(m,k) \subset\overline{\mathcal{C}_{q}(m,k)}\subset 
\mathcal{C}_{qc}(m,k) \subset \mathcal{C}_{ns}(m,k).$$In general all the above inclusions are nontrivial \cite{Slofstra1, Slofstra2, Col, Col2, Prakash, Re}. In particular, the question concerning equality $\overline{\mathcal{C}_q(m,k)}=\mathcal{C}_{qc}(m,k) $ for all scenarios $(m,k)$ is known as the Tsirelson's conjecture \cite{Boris, Scholz} which is equivalent (due to \cite{Kirchberg, Fritz, Junge, Ozawa}) to the long-standing Connes' embedding problem \cite{Connes, Pisier} or so-called Kirchberg's conjecture \cite{ Kirchberg, Pisier} and plethora of other important questions in operator algebras \cite{Musat, Dykema, Harris1, Harris2, Kavruk}. Note that although the Tsirelson's conjecture has been refuted \cite{Re} both mathematical and physical consequences of this fact are far from obvious and still inspire new areas of investigation \cite{Goildbring1,Goldbring2, Gao, Cleve}.

Bell-type scenarios are not the only experimental setup witnessing non-classicality (nonlocality) of quantum description. The other important paradigm of a similar feature is given by the so-called steering scenarios. Here two separated parties still share a joint state of a global system but only one of the parties steers the local state of the other by local measurements with a random choice of settings $x=1,\ldots, m$ with possible outcomes $a=1,\ldots, k$ leading to the statistical description provided by the mathematical notion of assemblage.

Starting from the heuristic conception \cite{s35,S36} and recent reformulation \cite{WJD07} questions regarding quantum steering flourish into an active area of research \cite{R1, R2, steering1, WJD072, WJD073,  SNC04} including such generalizations as multipartite settings \cite{steering0} and post-quantum (but still no-signaling) cases \cite{SBCSV15, SAPHS18, HS18, SHSA20}. Since the phenomenon of steering is mostly studied from the perspective of quantum information theory (including foundational perspective, resource theory, and possible application for semi-device-independent cryptography) it is typically analyzed under the assumption that composed systems are modeled by the tensor product of Hilbert spaces (often with additional requirement of finite-dimensional spaces). In other words, most conclusions obtained within this paradigm may not be entirely suitable for considerations of Algebraic Quantum Field Theory \cite{Haag, Haag0}. There is a need for discussion on steering in a model of commuting observables acting on a single Hilbert space, similar to the previously evoked case of correlations. Although it has been shown that such an approach may lead to different conclusions than the tensor product formalism \cite{Nava, Yan} there is still a lack of a methodical approach to this topic that would not be based on interesting yet particular and ad-hoc examples. 

For that reason, motivated by previous results on sets of correlations \cite{Fritz, Junge, Ozawa} we propose a more systematic approach to steering in the C*-algebraic framework starting from the formalization of the notion of quantum commuting assemblages and ending with comparisons between these class of assemblages and classes obtained within the quantum tensor model or no-signaling approach.

\section{Steering in the finite-dimensional case}
Consider a scenario of two (space-like) separated, finite-dimensional subsystems A and B that share a common quantum state describing the physical properties of the joint system. Assume that one of the subsystems, connected to observer B, is fully trusted or characterized, while subsystem A is not (i.e. dimension of a local Hilbert space of subsystem A and a particular form of performed measurements are not known from the perspective of B). Assume that only the observer connected with subsystem A performs local measurements on the joint system choosing (at random) settings from $x=1,\ldots, m$ and obtaining outcomes $a=1,\ldots, k$ while the subsystem B of the second observer is given by $M_n(\mathbb{C})$. Let $p(a|x)$ denotes a probability of obtaining outcome $a$ while measuring $x$ on subsystem A. If a local state of subsystem B conditioned upon $a,x$ is given by a density matrix $\rho_{a|x}\in M_n(\mathbb{C})$, then probabilistic description of such steering scenario is mathematically given by the notion of assemblage, i.e. a collection of subnormalized states (positive operators) $\sigma_{a|x}=p(a|x)\rho_{a|x}\in M_n(\mathbb{C})$. We say that $\Sigma=\left\{\sigma_{a|x}\right\}_{a,x}$ is a quantum assemblage acting on $M_n(\mathbb{C})$ if there exists $M_m(\mathbb{C})$, a joint state (density matrix) $\rho_{AB}\in M_m(\mathbb{C})\otimes M_n(\mathbb{C})$ and POVMs (positive operator-valued measures) elements $M_{a|x}\in M_m(\mathbb{C})$ such that 
\begin{equation}\label{s1}
\sigma_{a|x}=\mathrm{Tr}_A(M_{a|x}\otimes\mathds{1}\rho_{AB})
\end{equation}for any measurement $x=1,\ldots, m$ and outcome $a=1,\ldots, k$. Note that in principle, one could permit the infinite-dimensional description of subsystem A while keeping the finite-dimensional character of subsystem $B$, however, as we see below (compare Theorem \ref{Gisin}) in the case of assemblages on $M_n(\mathbb{C})$ this does not change anything. Similarly, since for any finite set of POVMs elements $M_{a|x}\in B(H)$ there exists a common isometry $V:H\rightarrow \tilde{H}$ such that $M_{a|x}=V^*P_{a|x}V$ for some PVMs (projection-valued measures) elements $P_{a|x}$, one could equivalently define quantum assemblages using PVMs instead of POVMs.

Following the intuition coming from Bell-type experiments under general no-signaling principle for correlations \cite{PR}, one can relax the quantum mechanical restrictions and consider a generalized scenario where only characterized subsystem B is of quantum nature, while the composed system is described by some possibly post-quantum theory yet still bounded by no-signaling constraints. Here local measurements performed on subsystem A are described just by classical labels of inputs $x$ and outputs $a$. We define a no-signaling assemblage on $M_n(\mathbb{C})$ as a collection $\Sigma=\left\{\sigma_{a|x}\right\}_{a,x}$ of subnormalized states (positive operators) $\sigma_{a|x}\in M_n(\mathbb{C})$ such that for any fixed $x$ 
\begin{equation}\label{s2}
\sum_a\sigma_{a|x}=\sigma
\end{equation}for some fixed state (density matrix) $\sigma\in M_n(\mathbb{C})$ that can be interpreted as a local state of a quantum subsystem B (before the steering measurements occurred). Since all assemblages given as (\ref{s1}) fulfill (\ref{s2}), quantum assemblages form a subset of all no-signaling ones. However, in the bipartite case consideration of no-signaling approach does not lead to any new phenomena \cite{HJW93, Gisin} (see also \cite{SBCSV15, SHSA20} for explicit formulation of that result).

\begin{theorem}\label{Gisin}
Let $\Sigma=\left\{\sigma_{a|x}\right\}_{a,x}$ be a no-signaling assemblage on $M_n(\mathbb{C})$, then there exists a density matrix $\rho_{AB}\in M_m(\mathbb{C})\otimes M_n(\mathbb{C})$, and POVMs elements $M_{a|x}\in M_m(\mathbb{C})$ such that $\sigma_{a|x}=\mathrm{Tr}_A(M_{a|x}\otimes\mathds{1}\rho_{AB})$.
\end{theorem}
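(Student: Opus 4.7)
The plan is to reduce the theorem to the classical Schrödinger--Hughston--Jozsa--Wootters (HJW) mixture construction. First, the no-signaling condition \eqref{s2} gives a single density matrix $\sigma := \sum_a \sigma_{a|x} \in M_n(\mathbb{C})$ independent of $x$, and from $0 \leq \sigma_{a|x} \leq \sigma$ every $\sigma_{a|x}$ is supported on $\mathrm{supp}(\sigma)$. I would take $m = n$ and fix the canonical purification $|\psi\rangle = \sum_i \sqrt{\lambda_i}\,|e_i\rangle \otimes |e_i\rangle$ of $\sigma = \sum_i \lambda_i |e_i\rangle\langle e_i|$, and set $\rho_{AB} := |\psi\rangle\langle\psi|$; this is the single shared state that will be used for every measurement setting $x$.

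Next, for each $x$ I would define
\begin{equation*}
M_{a|x} \;:=\; \bigl(\sigma^{-1/2}\,\sigma_{a|x}\,\sigma^{-1/2}\bigr)^{T},
\end{equation*}
with the transpose taken in the basis $\{e_i\}$ (assuming for the moment that $\sigma$ is invertible). Each $M_{a|x}$ is positive, and $\sum_a M_{a|x} = (\sigma^{-1/2}\sigma\sigma^{-1/2})^T = \mathds{1}$, so $\{M_{a|x}\}_a$ is a POVM on $M_m(\mathbb{C})$. The desired identity $\mathrm{Tr}_A((M_{a|x}\otimes\mathds{1})\rho_{AB}) = \sigma_{a|x}$ then follows by a direct Schmidt-basis computation exploiting the standard fact $\mathrm{Tr}_A((X \otimes \mathds{1})|\psi\rangle\langle\psi|) = \sqrt{\sigma}\,X^{T}\sqrt{\sigma}$ valid for every $X\in M_m(\mathbb{C})$.

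The one piece of bookkeeping is the case when $\sigma$ is rank-deficient: replace $\sigma^{-1/2}$ by the Moore--Penrose pseudoinverse on $\mathrm{supp}(\sigma)$ and extend each $M_{a|x}$ on $\ker(\sigma)$ by $(1/k)\mathds{1}$. Because $|\psi\rangle$ has no weight outside $\mathrm{supp}(\sigma)\otimes\mathrm{supp}(\sigma)$, this extension does not affect any of the partial traces, while the POVM condition $\sum_a M_{a|x} = \mathds{1}$ is preserved. I do not anticipate a genuine obstacle here; the argument is essentially an unpacking of the HJW construction in the language of assemblages, and the cited references already contain the core computation.
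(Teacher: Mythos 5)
Your construction is correct and is exactly the one the paper has in mind: the paper only sketches the proof, noting that ``operators $\sigma_{a|x}$ provide building blocks for POVMs elements $M_{a|x}$'' and deferring to the HJW/Gisin references, and your choice $M_{a|x}=(\sigma^{-1/2}\sigma_{a|x}\sigma^{-1/2})^{T}$ acting on the purification of $\sigma$ (with the pseudoinverse handling of $\ker\sigma$) is precisely that standard construction. Nothing is missing.
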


The proof of the above theorem is based on the direct construction of a certain quantum realization of $\Sigma=\left\{\sigma_{a|x}\right\}_{a,x}$ - in particular operators $\sigma_{a|x}$ provide building blocks for POVMs elements $M_{a|x}$. Interestingly, this no-go result does not extend to the tripartite case (with the addition of yet another separate untrusted subsystem on which the local observer performs measurements). In that case, there exists post-quantum steering, i.e. there is a gap between quantum and no-signaling description \cite{SBCSV15, SAPHS18, HS18}.

In the following considerations, we address the issue of similar comparison between quantum and no-signaling models (for the paradigmatic bipartite case) in a generalized framework of arbitrary (unital) C*-algebras.

\section{No-signaling, quantum commuting and quantum tensor assemblages}

Steering scenarios considered in the previous section, typically in the context of quantum informational tasks, assume that the local subsystem of a trusted party is characterized by an algebra of observables given as a full matrix algebra $M_n(\mathbb{C})$ (corresponding to some finite-dimensional Hilbert space). However, since arbitrary quantum systems may require descriptions based on infinite-dimensional Hilbert spaces, there is a need for refinement of presented concepts. For that reason, in the most general case,  local algebras of observables should be modeled within the framework of (unital) C*-algebras that can be concretely realized as norm closed $*$-subalgebras in the algebra $B(H)$ of bounded operators acting on some Hilbert space $H$.

In order to extend the previously discussed notion of an assemblage to this generalized setting, one should abandon the density matrix interpretation and consider instead a C*-algebraic notion of (abstract) state given as a positive (hence bounded) functional $\phi$ that is normalized, i.e. $\phi(\mathds{1})=1$ (in what follows we will denote a set of all states on a unital C*-algebra $B$ by $S(B)$). Note that in a case of matrix algebra $M_n(\mathbb{C})$ density operators $\rho\in M_n(\mathbb{C})$ are in bijective correspondence with states $\phi\in S(M_n(\mathbb{C}))$ given by the formula $\phi(\cdot)=\mathrm{Tr}(\rho\cdot)$. These observations enable us to consider previously discussed finite-dimensional examples as a particular case within the new formalism.

From now on we will restrict our attention to nontrivial scenarios with at least two choices of measurements settings ($m\geq 2$) and measurement outcomes ($k\geq 2$). For notational simplicity, as above, we will consider only a case when number of possible outcomes is the same for each choice of a measurement setting. All results stated below remain true in a generalized case when number of measurement outcomes $k_m$ may depend a on measurement setting $m$.

Following \cite{Fritz, Yan} we recall the definition of quantum assemblage, coming from local measurements on the tensor product of Hilbert spaces.

\begin{definition}\label{qt}
We say that a collection of positive functionals $\Sigma=\left\{\sigma_{a|x}\right\}_{a,x}$ on a unital C*-algebra $B$ is a quantum tensor assemblage if there exist a Hilbert space $H$ together with a state $\phi \in S(B(H)\otimes_{min}B)$ and PVMs elements $P_{a|x}\in B(H)$  such that $$\sigma_{a|x}(\cdot)=\phi(P_{a|x}\otimes \cdot)$$
for all $a=1,\ldots, k$, $x=1,\ldots, m$. For a given scenario $(m,k)$ we denote the set of all quantum tensor assemblages on $B$ by $\mathbf{qA}(m,k,B)$.
\end{definition}

Indeed, note that according to the construction of the minimal tensor products of C*-algebras, the above definition covers exactly those assemblages that can be obtained starting from a unital inclusion (given by a $*$-homomorphism) of $B\subset B(\tilde{H})$ and (abstract) states on $B(H\otimes \tilde{H})$. In other words, it recovers the usual quantum mechanical approach to composition of subsystems, where local algebras of observables act on different legs of the tensor product of Hilbert spaces.

Similarly to the previously covered case of matrix algebras, one could go beyond the formalism of quantum mechanics and explore the possibility of a steering scenario where the description of a joint system may not be of a quantum nature while the given procedure still obeys no-signaling constraints. To realize this idea within the most general C*-algebraic framework we evoke the following definition (see also \cite{Fritz, Yan}).

\begin{definition}\label{ns}
We say that a collection of positive functionals $\Sigma=\left\{\sigma_{a|x}\right\}_{a,x}$ on a unital C*-algebra $B$ is a no-signaling assemblage if there exists a state $\sigma \in S(B)$ such that $$\sum_a\sigma_{a|x}=\sigma $$
for any $x=1,\ldots, m$. For a given scenario $(m,k)$ we denote the set of all no-signaling assemblages on $B$ by $\mathbf{nsA}(m,k,B)$.
\end{definition}

Beside evoked definitions, in the light of previous discussion motivated by the algebraic quantum field theory \cite{Haag, Haag0}, it is natural to consider yet another class of assemblages. Description of this intermediate set formalizes questions (initialized in  \cite{Nava}) concerning steering inequalities in the commuting models.

\begin{definition}\label{qc}
We say that a collection of positive functionals $\Sigma=\left\{\sigma_{a|x}\right\}_{a,x}$ on a unital C*-algebra $B$ is a quantum commuting assemblage if there exists a unital inclusion $B\subset B(H)$ (given by some $*$-homomorphism) and PVMs elements $P_{a|x}\in B'\subset B(H)$ together with $\phi\in S(B(H))$ such that
\begin{equation}\label{formula}
\sigma_{a|x}(\cdot)=\phi(P_{a|x}\cdot)
\end{equation}for all $a=1,\ldots, k$, $x=1,\ldots, m$. For a given scenario $(m,k)$ we denote the set of all quantum commuting assemblages on $B$ by $\mathbf{qcA}(m,k,B)$. 
\end{definition}

Quantum commuting assemblages may be equivalently defined by the following formula $\sigma_{a|x}(\cdot)=\phi(P_{a|x}\otimes\cdot )$ with $\phi \in S(A\otimes_{max}B)$ for some unital C*-algebra $A$. Indeed, if $\Sigma=\left\{\sigma_{a|x}\right\}_{a,x}\in \mathbf{qcA}(m,k,B)$ is represented by (\ref{formula}), then natural inclusions $\pi:B\rightarrow B(H)$ and $\tilde{\pi}:B'\rightarrow B(H)$ have commuting ranges, and as a consequence, there exists a unital completely positive extension $\tilde{\pi}:B'\otimes_{max}B\rightarrow B(H)$ such that $\tilde{\pi}(a\otimes b)=ab$ for all $a\in B', b\in B$. Therefore, $\sigma_{a|x}(\cdot)=\tilde{\phi}(P_{a|x}\otimes \cdot)$ for $\tilde{\phi}=\phi\circ \tilde{\pi}\in S(B'\otimes_{max} B)$. On the other hand, if $\sigma_{a|x}(\cdot)=\tilde{\phi}(P_{a|x}\otimes \cdot)$ for some $\tilde{\phi}\in S(A\otimes_{max}B)$ then under natural inclusion $\pi:B\rightarrow A\otimes_{max}B\subset B(H)$ with $\pi(B)=\mathds{1}\otimes B$ it is clear that $P_{a|x}\otimes \mathds{1}\in \pi(B)'$ and $\Sigma=\left\{\sigma_{a|x}\right\}_{a,x}$ admits quantum realization in commuting model. A shift from the tensor product of Hilbert spaces to commuting model is described by a change from minimal to maximal tensor products of underlying (local) C*-algebras of observables.

It is easy to observe that $$\mathbf{qA}(m,k,B)\subset \mathbf{qcA}(m,k,B)\subset \mathbf{nsA}(m,k,B)$$for a given scenario $(m,k)$ and fixed local C*-algebra $B$ of characterized (trusted) subsystem. In the remaining part of the paper, we provide a systematic analysis of the above chain of inclusions.

Note that Definitions \ref{qt} and \ref{qc} can be modified by substitution of PVMs elements $P_{a|x}$ by positive operator-valued measures (POVMs) elements $M_{a|x}$. Indeed, the collection of positive operators $M_{a|x}\in B$ such that $\sum_a M_{a|x}=\mathds{1}$ for any fixed $x$ is the most general form of a measurement representation used within quantum mechanical formalism. Before we address the possible differences between the PVM-based and POVM-based approaches we introduce a notion of the universal (unital) C*-algebra (see \cite{Blackadar2}) generated by PVMs elements $P_{a|x}$ defined by
\begin{equation}\nonumber
C^*(m,k)=\biggr \langle\left\{P_{a|x}\right\}_{a=1,x=1}^{k,m}:\sum_aP_{a|x}=\mathds{1},P_{a|x}=P^*_{a|x}=P^2_{a|x}\biggr \rangle.
\end{equation}Universal property of $C^*(m,k)$ states that for any PVMs elements $Q_{a|x}$ from some unital C*-algebra $C$, there exists a unique unital $*$-homomorphism $\pi:C^*(m,k)\rightarrow C$ such that $\pi(P_{a|x})=Q_{a|x}$. The aforementioned C*-algebra algebra can be concretely realized as a full (universal) group C*-algebra, namely
\begin{equation}\nonumber
C^*(m,k)\simeq C^*(*_{i=1}^m\mathbb{Z}_k)\simeq C^*(\mathbb{Z}_k)*\ldots *C^*(\mathbb{Z}_k)
\end{equation}where $*$ stands for free product of C*-algebras (amalgamated over identity). Note that since $C^*(\mathbb{Z}_k)\simeq \mathbb{C}^k$, $C^*(m,k)$ is $*$-isomorphic with a full free product of finite-dimensional commutative algebras. Observe then that $C^*(\mathbb{Z}_k)$ is a universal C*-algebra generated by a single PVM, i.e. by a collection of projectors $P_{a|x}$ with $a=1,\ldots, k$ such that $\sum_aP_{a|x}=\mathds{1}$. Moreover, $C^*(m,k)$ posses certain extension properties coming from the following result due to Boca \cite{Boca}.

\begin{theorem}[Boca]\label{boca}Let $A$, $B$ and $C$ be unital C*-algebras. If $\mathcal{E}_A:A\rightarrow C$ and $\mathcal{E}_B:B\rightarrow C$ are unital and completely positive maps, then there exists a unital completely positive map $\mathcal{E}:A*B\rightarrow C$ such that $\mathcal{E}|_A=\mathcal{E}_A$ and $\mathcal{E}|_B=\mathcal{E}_B$.
\end{theorem}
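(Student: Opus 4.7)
The plan is to reduce the theorem to a free-product construction of two Stinespring dilations amalgamated over a common subspace. First, I would faithfully represent $C \subset B(K)$ so that $\mathcal{E}_A$ and $\mathcal{E}_B$ become UCP maps into $B(K)$. Stinespring's dilation theorem then produces Hilbert spaces $K_A \supset K$, $K_B \supset K$ together with unital $*$-representations $\pi_A : A \to B(K_A)$ and $\pi_B : B \to B(K_B)$ satisfying $\mathcal{E}_A(a) = P_K \pi_A(a)|_K$ and $\mathcal{E}_B(b) = P_K \pi_B(b)|_K$, where $P_K$ is the orthogonal projection onto the common subspace $K$.

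Next, I would glue these dilations via a Voiculescu-type free product of pointed Hilbert spaces amalgamated over $K$. Writing $K_A = K \oplus K_A^\circ$ and $K_B = K \oplus K_B^\circ$, set
$$\tilde{K} \;=\; K \;\oplus\; \bigoplus_{n \geq 1}\; \bigoplus_{\iota_1 \neq \iota_2 \neq \cdots \neq \iota_n}\; K_{\iota_1}^\circ \otimes K_{\iota_2}^\circ \otimes \cdots \otimes K_{\iota_n}^\circ,$$
where each $\iota_j \in \{A,B\}$ and consecutive indices differ. Following Voiculescu's recipe one defines ``free'' extensions $\tilde{\pi}_A : A \to B(\tilde{K})$ and $\tilde{\pi}_B : B \to B(\tilde{K})$ which reproduce $\pi_A, \pi_B$ on the canonical embeddings of $K_A, K_B$ into $\tilde{K}$ and act on the alternating-tensor summands by the usual creation/annihilation-type formulas. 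These are bounded unital $*$-representations, so by the universal property of the free product $A * B$ they combine into a single unital $*$-representation $\tilde{\pi} : A * B \to B(\tilde{K})$.

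Finally, let $V : K \hookrightarrow \tilde{K}$ be the isometric inclusion onto the distinguished summand and define $\mathcal{E}(x) = V^* \tilde{\pi}(x) V$ for $x \in A * B$. Being the compression of a unital $*$-representation, $\mathcal{E}$ is automatically unital and completely positive. To verify $\mathcal{E}|_A = \mathcal{E}_A$, note that for $a \in A$ and $\xi \in K$ one has $\tilde{\pi}_A(a)\xi = \pi_A(a)\xi \in K_A \subset K \oplus K_A^\circ \subset \tilde{K}$, so $V^* \tilde{\pi}_A(a) V \xi = P_K \pi_A(a) \xi = \mathcal{E}_A(a)\xi$, and symmetrically for $B$. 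Pulling back along the faithful representation of $C$ then gives a UCP map into $C$ itself.

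The main technical obstacle is the rigorous construction of the amalgamated free product $\tilde{K}$ and the careful verification that the formally defined operators $\tilde{\pi}_A(a), \tilde{\pi}_B(b)$ are bounded on every alternating summand and satisfy the $*$-algebra relations of $\pi_A, \pi_B$; this is precisely the content of Voiculescu's free product construction, and once it is set up, the ``free'' structure ensures that the cross terms $\tilde{\pi}_A(a)\tilde{\pi}_B(b)\cdots$ land in $\tilde{K} \ominus K$, which is exactly what makes the compression on $K$ restrict correctly to each factor.
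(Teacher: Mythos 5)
The paper does not actually prove this statement: it is imported verbatim from Boca's 1991 paper \cite{Boca} and used as a black box, so there is no internal proof to compare yours against. Judged on its own, your sketch follows the morally correct route --- Stinespring-dilate each map, glue the dilations by a free product of Hilbert spaces, compress back to $K$ --- which is indeed the skeleton of the known arguments. But it has a genuine gap at the decisive step. Voiculescu's free product of \emph{pointed} Hilbert spaces, which you invoke to define $\tilde{\pi}_A$ and $\tilde{\pi}_B$, requires the distinguished subspace to be one-dimensional: the ``creation/annihilation-type formulas'' are $\lambda_\iota(T)=V_\iota(T\otimes 1)V_\iota^*$ for the unitary identification $\tilde{K}\cong K_\iota\otimes\tilde{K}(\iota)$, where $\tilde{K}(\iota)$ collects the words not beginning with $\iota$, and this identification rests on $\mathbb{C}\xi_\iota\otimes V\cong V$. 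As soon as $\dim K>1$ it fails, since $(K\oplus K_\iota^{\circ})\otimes\tilde{K}(\iota)$ contains $K\otimes\tilde{K}(\iota)$ rather than $\tilde{K}(\iota)$. Concretely, for a vector $\eta\in K_B^{\circ}\subset\tilde{K}$ there is no $K$-leg on which $\pi_A(a)$ can act and no distinguished vector of $K$ to insert, so $\tilde{\pi}_A(a)\eta$ is simply not defined by the recipe you cite. The amalgamated variant of the free product construction does not rescue this either: it requires the dilation spaces to be bimodules over a common subalgebra with conditional expectations onto it, whereas here $\mathcal{E}_A,\mathcal{E}_B$ are arbitrary UCP maps into $C$ and $K$ is merely a representation space. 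Supplying a construction that works in this generality is precisely the content of Boca's paper, so the step you defer to ``Voiculescu's recipe'' is, in effect, the whole theorem.

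Two further remarks. First, there is an easy argument for the weaker statement with $C$ replaced by $B(K)$: after inflating, both Stinespring dilations can be realized on one common space $\hat{K}\supset K$ with the dilation isometry equal to the inclusion, giving unital $*$-homomorphisms $\sigma_A,\sigma_B$ into $B(\hat{K})$; the universal property of $A*B$ combines them, and compression to $K$ yields a UCP map restricting correctly on $A$ and on $B$. This shows where the real difficulty lies: forcing the range into $C$. Your proposal dismisses that point in one sentence (``pulling back along the faithful representation''), but for a mixed word $a_1b_1a_2\cdots$ there is no a priori reason that $V^*\tilde{\pi}(a_1b_1\cdots)V$ belongs to the subalgebra $C\subset B(K)$; it follows only from the freeness property of the construction (compressions of reduced centered words vanish, so $\mathcal{E}$ of any word is a polynomial in values of $\mathcal{E}_A$ and $\mathcal{E}_B$), which must be stated and proved. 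The range condition is not cosmetic here: in the paper's applications (Theorem \ref{max} and Theorem \ref{main_2_two}) the image of $\mathcal{E}$ must land in a prescribed algebra, e.g.\ in $\pi(B)'$, for the subsequent maximal-tensor-product argument to go through.
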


The previous discussion enable us to provide an equivalent characterization of quantum commuting assemblages based on the structure of universal C*-algebra $C^*(m,k)$.

\begin{theorem}\label{max}Let $\Sigma=\left\{\sigma_{a|x}\right\}_{a,x}$ be a collection of positive functionals on a unital C*-algebra $B$. The following conditions are equivalent
\begin{enumerate}
\item $\sigma_{a|x}(\cdot)=\phi(P_{a|x}\otimes \cdot)$ for some $\phi\in S(C^*(m,k)\otimes_{max}B)$ and generating projectors $P_{a|x}\in C^*(m,k)$
\item $\sigma_{a|x}(\cdot)=\phi(P_{a|x}\otimes \cdot)$ for some unital C*-algebra $A$, some $\phi\in S(A\otimes_{max}B)$ and PVMs elements $P_{a|x}\in A$
\item  $\sigma_{a|x}(\cdot)=\phi(M_{a|x}\otimes \cdot)$ for some unital C*-algebra $A$, some $\phi\in S(A\otimes_{max}B)$ and POVMs elements $M_{a|x}\in A$
\end{enumerate}
\end{theorem}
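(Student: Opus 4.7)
The implications $(1)\Rightarrow(2)\Rightarrow(3)$ are immediate: $(1)$ is the special case of $(2)$ with $A=C^*(m,k)$, and every PVM is a POVM, so $(2)$ is a special case of $(3)$. The substantive content of the theorem is therefore $(3)\Rightarrow(1)$.

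My plan for $(3)\Rightarrow(1)$ is to promote the POVMs realizing the assemblage in $A$ to the generating projections of $C^*(m,k)$ by constructing a single unital completely positive (UCP) map $\Psi\colon C^*(m,k)\to A$ satisfying $\Psi(P_{a|x})=M_{a|x}$ for all $a,x$, and then to pull back the state $\phi$ through the induced morphism on maximal tensor products. For each fixed setting $x$, the POVM $\{M_{a|x}\}_a$ determines a unital positive map from the commutative algebra $C^*(\mathbb{Z}_k)\cong\mathbb{C}^k$ to $A$ by sending the $a$-th minimal projection to $M_{a|x}$; unitality holds because $\sum_a M_{a|x}=\mathds{1}$, and positivity on a commutative domain automatically upgrades to complete positivity. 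Applying Theorem \ref{boca} inductively along the $m$-fold free product decomposition $C^*(m,k)\cong C^*(\mathbb{Z}_k)*\cdots*C^*(\mathbb{Z}_k)$ amalgamates these $m$ UCP maps into a single UCP map $\Psi\colon C^*(m,k)\to A$ with the desired behaviour on generators.

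Next I invoke functoriality of the maximal tensor product under UCP maps: the compositions $\iota_A\circ\Psi\colon C^*(m,k)\to A\otimes_{\max} B$ and the canonical embedding $\iota_B\colon B\to A\otimes_{\max} B$ have commuting ranges, so the universal property of $\otimes_{\max}$ assembles them into a UCP map $\Psi\otimes\mathrm{id}_B\colon C^*(m,k)\otimes_{\max} B\to A\otimes_{\max} B$. Setting $\phi'=\phi\circ(\Psi\otimes\mathrm{id}_B)\in S(C^*(m,k)\otimes_{\max} B)$ I obtain $\phi'(P_{a|x}\otimes b)=\phi(M_{a|x}\otimes b)=\sigma_{a|x}(b)$ for every $a,x$ and $b\in B$, which is exactly the form required in $(1)$.

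The only nontrivial ingredient is the construction of $\Psi$, which rests entirely on Boca's theorem; the remaining manipulations are standard properties of maximal tensor products. No genuine obstacle should arise, provided one carefully identifies $k$-outcome POVMs on $A$ with UCP maps $\mathbb{C}^k\to A$ and checks that Boca's theorem can be iterated along the free product decomposition of $C^*(m,k)$.
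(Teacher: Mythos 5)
Your proposal is correct and follows essentially the same route as the paper: you build the UCP map $C^*(m,k)\to A$ sending generators to the POVM elements via Boca's theorem (using automatic complete positivity of positive maps on the commutative algebras $C^*(\mathbb{Z}_k)$), then compose with the induced UCP map on maximal tensor products to pull back the state. The only cosmetic difference is that you make explicit the iteration of Boca's theorem over the $m$-fold free product, which the paper leaves implicit.
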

\begin{proof}Implications $(1)\Rightarrow (2)$ and $(2)\Rightarrow (3)$ are straightforward. To show that $(3)$ implies $(1)$ define a unital positive map $\mathcal{E}_x:C^*(\mathbb{Z}_k)\rightarrow A$  given by $\mathcal{E}_x(P_{a|x})=M_{a|x}$ for any choice of measurement $x=1,\ldots, m$. As $C^*(\mathbb{Z}_k)$ is commutative all of such maps  $\mathcal{E}_x$ are in fact completely positive. Since $C^*(m,k)\simeq C^*(\mathbb{Z}_k)\ast\ldots \ast C^*(\mathbb{Z}_k)$, by Theorem \ref{boca} there exists a unital completely positive map $\mathcal{E}:C^*(m,k)\rightarrow A$ such that $\mathcal{E}(P_{a|x})=M_{a|x}$ for all $a,x$. By the properties of the maximal tensor product, there exists a unital completely positive map $\tilde{\mathcal{E}}:C^*(m,k)\otimes_{max}B\rightarrow A\otimes_{max}B$ such that $\tilde{\mathcal{E}}(a\otimes b)=\mathcal{E}(a)\otimes b$ for all $a\in C^*(m,k)$ and $b\in B$. Therefore, there exists $\tilde{\phi}=\phi\circ \tilde{\mathcal{E}}\in S(C^*(m,k)\otimes_{max}B)$ such that $\sigma_{a|x}(\cdot)=\phi(P_{a|x}\otimes \cdot)$ for generating projectors $P_{a|x}\in C^*(m,k)$.
\end{proof}

Similar equivalent formulations can be obtained in the case of the quantum tensor model.

\begin{theorem}\label{min}Let $\Sigma=\left\{\sigma_{a|x}\right\}_{a,x}$ a collection of positive functionals on a unital C*-algebra $B$. The following conditions are equivalent
\begin{enumerate}
\item $\sigma_{a|x}(\cdot)=\phi(P_{a|x}\otimes \cdot)$ for some $\phi\in S(C^*(m,k)\otimes_{min}B)$ and generating projectors $P_{a|x}\in C^*(m,k)$
\item $\sigma_{a|x}(\cdot)=\phi(P_{a|x}\otimes \cdot)$ for some unital C*-algebra $A$, some $\phi\in S(A\otimes_{min}B)$ and PVMs elements $P_{a|x}\in A$
\item  $\sigma_{a|x}(\cdot)=\phi(M_{a|x}\otimes \cdot)$ for some unital C*-algebra $A$, some $\phi\in S(A\otimes_{min}B)$ and POVMs elements $M_{a|x}\in A$
\end{enumerate}
\end{theorem}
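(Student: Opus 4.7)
The plan is to mirror the proof of Theorem \ref{max}, with the only substantive change being a justification that the ambient tensor norm can be taken to be the minimal one throughout the construction. The implications $(1)\Rightarrow (2)$ and $(2)\Rightarrow (3)$ are immediate: for the first, take $A=C^*(m,k)$ with its generating projectors; for the second, note that every PVM is a POVM.

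For $(3)\Rightarrow (1)$ I would proceed exactly as in the proof of Theorem \ref{max}. Starting from POVM elements $M_{a|x}\in A$ and $\phi\in S(A\otimes_{min} B)$, for each fixed $x$ define a unital positive map $\mathcal{E}_x:C^*(\mathbb{Z}_k)\to A$ by sending the canonical generating projectors to $M_{a|x}$; complete positivity is automatic because the domain is commutative. Using the identification $C^*(m,k)\cong C^*(\mathbb{Z}_k)\ast\ldots\ast C^*(\mathbb{Z}_k)$ together with Theorem \ref{boca}, amalgamate these into a single unital completely positive map $\mathcal{E}:C^*(m,k)\to A$ satisfying $\mathcal{E}(P_{a|x})=M_{a|x}$.

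The one genuinely new ingredient compared with the max case is the tensor-product step: I need to promote $\mathcal{E}$ to a UCP map $\mathcal{E}\otimes \id_B: C^*(m,k)\otimes_{min} B\to A\otimes_{min} B$. This is a standard functoriality property of the minimal tensor product — the algebraic tensor product of UCP maps extends continuously to the minimal C*-norm on both sides, since the minimal norm is characterized by faithful representations and a Stinespring dilation of $\mathcal{E}$ realizes $\mathcal{E}\otimes \id_B$ as a compression of the spatial representation $\pi\otimes\id_B$. With this in hand, $\tilde{\phi}:=\phi\circ(\mathcal{E}\otimes \id_B)$ lies in $S(C^*(m,k)\otimes_{min}B)$ and satisfies
\begin{equation*}
\tilde{\phi}(P_{a|x}\otimes b)=\phi(M_{a|x}\otimes b)=\sigma_{a|x}(b)
\end{equation*}
for every $b\in B$, yielding $(1)$.

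The main point of care is precisely the functoriality of the minimal tensor product under UCP maps; in the max case the analogous statement was essentially built into the universal property of $\otimes_{max}$, whereas here it must be invoked as an external fact about CP maps. Apart from this, the Boca-based amalgamation argument transfers verbatim from the proof of Theorem \ref{max}.
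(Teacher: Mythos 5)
Your proposal is correct and follows essentially the same route as the paper, which simply states that $(3)\Rightarrow(1)$ follows by the argument of Theorem \ref{max} ``together with the properties of the minimal tensor product''. The property you identify — functoriality of $\otimes_{min}$ under unital completely positive maps, so that $\mathcal{E}\otimes\id_B$ extends to $C^*(m,k)\otimes_{min}B\rightarrow A\otimes_{min}B$ — is exactly the ingredient the paper is alluding to, and your Stinespring/compression justification of it is standard and correct.
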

\begin{proof}
Implications $(1)\Rightarrow (2)$ and $(2)\Rightarrow (3)$ are obvious. The implication from $(3)$ to $(1)$ comes from analogous reasoning as the one given in the proof of Theorem \ref{max} together with the properties of the minimal tensor product.
\end{proof}

From now on we will mostly take advantage of the above characterizations of sets $\mathbf{qA}(m,k,B)$ and $\mathbf{qcA}(m,k,B)$ expressed with respect to the universal C*-algebra $C^*(m,k)$ and its generators $P_{a|x}$. In particular, one can now justify that both quantum commuting and quantum tensor assemblages form closed and convex sets.

\begin{proposition}\label{Hahn}Sets of no-signaling $\mathbf{nsA}(m,k,B)$, quantum commuting $\mathbf{qcA}(m,k,B)$ and quantum tensor $\mathbf{qA}(m,k,B)$ assemblages are closed (with respect to a weak-* topology) and convex.
\end{proposition}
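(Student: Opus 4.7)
The plan is to endow the ambient space of collections $\{\sigma_{a|x}\}_{a,x}$ of positive functionals on $B$ with the product of weak-$*$ topologies, so that $\Sigma^{\alpha}\to\Sigma$ iff $\sigma^{\alpha}_{a|x}(b)\to\sigma_{a|x}(b)$ for every $a,x$ and every $b\in B$. In this topology, each of the three sets will be closed and convex.

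First I would dispatch the no-signaling case directly. Convexity is immediate, since a convex combination $\lambda\Sigma^{1}+(1-\lambda)\Sigma^{2}$ has marginal $\lambda\sigma^{1}+(1-\lambda)\sigma^{2}$, again a state. For closedness, suppose $\Sigma^{\alpha}\to\Sigma$. Each $\sigma_{a|x}$ is a weak-$*$ limit of positive functionals, hence positive. For every $b\in B$ the quantity $\sum_{a}\sigma_{a|x}^{\alpha}(b)$ is independent of $x$; passing to the limit, so is $\sum_{a}\sigma_{a|x}(b)$. The common marginal is a weak-$*$ limit of states and therefore a state, since the state space of a unital C*-algebra is weak-$*$ closed.

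For the quantum commuting case I would invoke Theorem \ref{max}, which identifies $\mathbf{qcA}(m,k,B)$ with the image of $S(C^{*}(m,k)\otimes_{max}B)$ under the map $F\colon\phi\mapsto\bigl(\phi(P_{a|x}\otimes\cdot)\bigr)_{a,x}$, where $P_{a|x}$ are the generating projectors of $C^{*}(m,k)$. The map $F$ is affine, so convexity of $\mathbf{qcA}(m,k,B)$ follows from convexity of the state space. For closedness, $S(C^{*}(m,k)\otimes_{max}B)$ is weak-$*$ compact by the Banach--Alaoglu theorem, and for each $a,x,b$ the evaluation $\phi\mapsto\phi(P_{a|x}\otimes b)$ is weak-$*$ continuous, so $F$ is continuous into the product topology; hence $\mathbf{qcA}(m,k,B)$ is the continuous image of a compact convex set, in particular closed and convex.

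The quantum tensor case is handled by the same argument, with Theorem \ref{min} in place of Theorem \ref{max} and $\otimes_{min}$ in place of $\otimes_{max}$. I do not foresee any genuine obstacle: the whole statement is a clean packaging of the equivalent state-space parametrizations provided by Theorems \ref{max} and \ref{min} together with the standard fact that state spaces of unital C*-algebras are weak-$*$ compact and convex. The only place demanding minor care is fixing the right topology on assemblages at the outset; once that is done, every step is routine.
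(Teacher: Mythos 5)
Your proposal is correct and follows essentially the same route as the paper: the no-signaling set is handled directly via the linearity of the constraints and weak-$*$ closedness of positivity and normalization, while the quantum commuting and tensor sets are realized as continuous affine images of the weak-$*$ compact state spaces $S(C^*(m,k)\otimes_{max}B)$ and $S(C^*(m,k)\otimes_{min}B)$ under $\phi\mapsto(\phi(P_{a|x}\otimes\cdot))_{a,x}$, using Banach--Alaoglu. Your product-of-weak-$*$-topologies setup coincides with the paper's weak-$*$ topology on $\bigl(\bigoplus_{i=1}^{mk}B\bigr)^*$, so there is no substantive difference.
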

\begin{proof}Let $B^*$ denotes a space of bounded functional on $B$ equipped with the weak-* topology. For a particular scenario with $m$ measurements and $k$ outcomes consider a dual space $\left(\bigoplus_{i=1}^{mk} B\right)^*$ of the direct sum of $mk$ copies of $B$ (once more equipped with the weak-* topology). Observe that since no-signaling assemblages are defined by linear constraints and a limit of a net of positive functionals has to be positive as well, it is clear that $\mathbf{nsA}(m,k,B)\subset \left(\bigoplus_{i=1}^{mk} B\right)^*$ is a closed convex set. Consider now a following function $F:S(C^*(m,k)\otimes_{max} B)\rightarrow \left(\bigoplus_{i=1}^{mk} B\right)^*$ given by 
\begin{equation}\nonumber
F(\phi)=(\phi(P_{1|1}\otimes\ \cdot),\phi(P_{2|1}\otimes\ \cdot),\ldots, \phi(P_{k|m}\otimes\ \cdot)).
\end{equation}Since $F(S(C^*(m,k)\otimes_{max} B))=\mathbf{qcA}(m,k,B)$ linearity and continuity of $F$ implies convexity and compactness of $\mathbf{qcA}(m,k,B)$ (as the set of states on a unital C*-algebra is compact according to Banach-Alaoglu theorem \cite{Kadison0}). The case of $\mathbf{qA}(m,k,B)$ comes from analogous reasoning with $\otimes_{min}$ instead of $\otimes_{max}$.
\end{proof}

\section{Inequivalence of steering scenarios in tensor and commuting models}

Since considerations regarding steering in a case of tensor product of Hilbert spaces and single Hilbert spaces with commuting operators provided in \cite{Nava} can be expressed in the C*-algebraic formalism of Definitions \ref{qt} and \ref{qc}, it is known that for some particular choice of a scenario $(m,k)$ and a C*-algebra $B$ quantum commuting and quantum tensor models do not coincide. In the remaining part of this section, we provide a systematic analysis of C*-algebras for which such gap can or cannot be detected. 

Note that due to Theorem \ref{max} and Theorem \ref{min}, if a unital C*-algebra $B$ of local observables is nuclear (i.e. $A\otimes _{max}B=A\otimes _{min} B$ for all C*-algebras $A$), then both quantum models of steering coincide for any scenario $(m,k)$. In particular, if $B$ is commutative (corresponds to the classical system) or $B=M_n(\mathbb{C})$ (corresponds to the finite-dimensional model of a system), then one cannot expect a gap between quantum commuting and tensor descriptions regardless of the number of measurement settings and outcomes.

On the other hand, since $\mathbb{Z}_2\ast \mathbb{Z}_2$ is amenable group, hence $C^*(2,2)\simeq C^*(\mathbb{Z}_2\ast \mathbb{Z}_2)$ is nuclear (and it is the only nuclear C*-algebra among the family  $C^*(\ast_{i=1}^m\mathbb{Z}_k)$) \cite{Brown,Pisier}.

\begin{proposition}\label{22}Let $B$ be a unital C*-algebra, then $\mathbf{qA}(2,2,B)=\mathbf{qcA}(2,2,B)$.
\end{proposition}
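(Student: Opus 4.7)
The plan is to reduce the equality $\mathbf{qA}(2,2,B)=\mathbf{qcA}(2,2,B)$ to the nuclearity of the universal C*-algebra $C^*(2,2)$ and invoke the common parametrization of both sets given by Theorems \ref{max} and \ref{min}. The inclusion $\mathbf{qA}(2,2,B)\subseteq \mathbf{qcA}(2,2,B)$ is already recorded in general, so only the reverse inclusion needs work.

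By condition $(1)$ of Theorem \ref{max}, every $\Sigma\in \mathbf{qcA}(2,2,B)$ has the form $\sigma_{a|x}(\cdot)=\phi(P_{a|x}\otimes\cdot)$ for some $\phi\in S(C^*(2,2)\otimes_{\max}B)$ and the generating projectors $P_{a|x}\in C^*(2,2)$, while by condition $(1)$ of Theorem \ref{min} the same formula with $\phi\in S(C^*(2,2)\otimes_{\min}B)$ parametrizes $\mathbf{qA}(2,2,B)$. Thus it suffices to show that the canonical surjection $C^*(2,2)\otimes_{\max}B\twoheadrightarrow C^*(2,2)\otimes_{\min}B$ is a $*$-isomorphism for every unital C*-algebra $B$, i.e. that $C^*(2,2)$ is nuclear. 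Once this is done, the two spaces of states coincide and one can pull an arbitrary $\phi\in S(C^*(2,2)\otimes_{\max}B)$ back to a state on $C^*(2,2)\otimes_{\min}B$ that produces exactly the same assemblage under the parametrization map.

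For the nuclearity step I would use the identification $C^*(2,2)\cong C^*(\mathbb{Z}_2\ast \mathbb{Z}_2)$ noted immediately above the proposition. The group $\mathbb{Z}_2\ast\mathbb{Z}_2$ is isomorphic to the infinite dihedral group $\mathbb{Z}\rtimes \mathbb{Z}_2$, which is solvable and hence amenable; by the classical Lance theorem the full group C*-algebra of an amenable discrete group is nuclear, so $C^*(2,2)$ is nuclear and the required tensor-product collapse follows from the defining property of nuclear C*-algebras.

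The only thing to be careful about is that the map in $(1)$ of Theorem \ref{max} and in $(1)$ of Theorem \ref{min} is literally the same map $\phi\mapsto (\phi(P_{a|x}\otimes\cdot))_{a,x}$; this is not a real obstacle, because the tensor-product collapse identifies the two domains as the same C*-algebra, so the two state spaces are identified and both parametrization maps are restrictions of one and the same map. No substantive difficulty is expected beyond quoting amenability of $\mathbb{Z}_2\ast\mathbb{Z}_2$ and the corresponding nuclearity consequence.
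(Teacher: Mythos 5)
Your argument is correct and is essentially the paper's own: the paper likewise deduces the proposition from the nuclearity of $C^*(2,2)\cong C^*(\mathbb{Z}_2\ast\mathbb{Z}_2)$ (via amenability of $\mathbb{Z}_2\ast\mathbb{Z}_2$) combined with the characterizations in Theorems \ref{max} and \ref{min}. Your extra detail identifying $\mathbb{Z}_2\ast\mathbb{Z}_2$ with the infinite dihedral group $\mathbb{Z}\rtimes\mathbb{Z}_2$ and invoking solvability is a perfectly valid justification of the amenability claim that the paper simply asserts.
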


One could naively assume that quantum commuting and tensor models are equivalent if and only if at least one of algebras $C^*(\ast_{i=1}^m\mathbb{Z}_k)$ and $B$ is nuclear. It is, however, not the case. Indeed, note that $B(H)\otimes_{max} B(H)\neq B(H)\otimes_{min}B(H)$ with separable Hilbert space $H$  \cite{JungeP}, nevertheless based on Kirchberg result \cite{Kirchberg} we show the following proposition.

\begin{proposition}Let $B(H)$ denotes a C*-algebra of bounded operator on a (separable) Hilbert space $H$, then 
\begin{equation}\nonumber
\mathbf{qA}(m,k,B(H))=\mathbf{qcA}(m,k,B(H))
\end{equation}for all scenarios $(m,k)$.
\end{proposition}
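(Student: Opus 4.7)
The inclusion $\mathbf{qA}(m,k,B(H))\subset \mathbf{qcA}(m,k,B(H))$ is built into the definitions, so the content lies in the reverse inclusion. My plan is to reduce the problem, via the characterizations from Theorems \ref{max} and \ref{min}, to equality of the minimal and maximal C*-tensor products of $C^*(m,k)$ with $B(H)$, and then to invoke Kirchberg's theorem relating LLP and WEP.

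First, I would note that any $\Sigma\in\mathbf{qcA}(m,k,B(H))$ admits, by Theorem \ref{max}, a representation $\sigma_{a|x}(\cdot)=\phi(P_{a|x}\otimes\cdot)$ for some $\phi\in S(C^*(m,k)\otimes_{max}B(H))$ and the canonical generating projectors $P_{a|x}\in C^*(m,k)$; by Theorem \ref{min}, to prove $\Sigma\in\mathbf{qA}(m,k,B(H))$ it suffices to produce a state on the corresponding minimal tensor product implementing $\Sigma$ through the same formula. Thus the problem collapses to
\begin{equation}\nonumber
C^*(m,k)\otimes_{min}B(H)=C^*(m,k)\otimes_{max}B(H),
\end{equation}
since any state on the left automatically agrees with the original $\phi$ on all elementary tensors of the form $P_{a|x}\otimes b$, and hence reproduces $\Sigma$.

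Second, I would invoke Kirchberg's theorem: if a unital C*-algebra $A$ has the local lifting property (LLP) and $C$ has the weak expectation property (WEP), then $A\otimes_{min}C=A\otimes_{max}C$. The algebra $B(H)$ is the prototype of a WEP (in fact injective) algebra, so it remains to argue that $C^*(m,k)$ has the LLP. Since $C^*(m,k)\cong C^*(\mathbb{Z}_k)\ast\cdots\ast C^*(\mathbb{Z}_k)$ is a unital free product (amalgamated over $\mathbb{C}$) of finite-dimensional commutative, hence nuclear, C*-algebras, this follows from Kirchberg's result that the full unital free product of nuclear C*-algebras possesses the LLP. A more self-contained route is available through Theorem \ref{boca}: unital completely positive maps out of a nuclear algebra always admit local liftings (Arveson extension combined with Choi--Effros), and Boca's amalgamation then assembles such liftings across the free factors into a local lifting for $C^*(m,k)$ itself.

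The main conceptual obstacle is securing the LLP of the free product $C^*(m,k)$; once that structural fact is in place, Kirchberg's theorem delivers the tensor-product equality and Theorems \ref{max}--\ref{min} immediately promote any quantum commuting realization of $\Sigma$ on $B(H)$ to a quantum tensor realization, yielding $\mathbf{qcA}(m,k,B(H))\subset \mathbf{qA}(m,k,B(H))$.
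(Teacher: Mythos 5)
Your proof is correct and takes essentially the same route as the paper: both reduce the claim, via Theorems \ref{max} and \ref{min}, to the identity $C^*(m,k)\otimes_{min}B(H)=C^*(m,k)\otimes_{max}B(H)$, which the paper obtains from Kirchberg's theorem that the property $A\otimes_{min}B(H)=A\otimes_{max}B(H)$ is stable under unital free products of C*-algebras --- exactly the LLP-plus-WEP packaging you invoke. One caution about your optional ``self-contained'' aside: Boca's theorem supplies \emph{some} unital completely positive extension of the two local lifts, but unital completely positive maps on a free product are not determined by their restrictions to the free factors, so that extension need not compose with the quotient map to recover the original map; establishing the (local) lifting property of the free product genuinely requires the finer argument of Pisier rather than a direct appeal to Theorem \ref{boca}.
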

\begin{proof}
Observe that $C^*(m,k)\simeq C^*(\mathbb{Z}_k)\ast\ldots \ast C^*(\mathbb{Z}_k)$ with $C^*(\mathbb{Z}_k)\simeq \mathbb{C}^k$ being commutative. Due to Kirchberg result \cite{Kirchberg} (see also \cite{Pisier96} and \cite{Ozawa}), if unital C*-algebras $A_i$ for $i=1,2$ satisfy $A_i\otimes_{max} B(H)=A_i\otimes_{min}B(H)$, then $\left(A_1\ast A_2\right)\otimes_{max} B(H)=\left(A_1\ast A_2\right)\otimes_{min}B(H)$. Therefore, it follows that $C^*(m,k)\otimes_{max}B(H)= C^*(m,k)\otimes_{min}B(H)$ and the proof is completed by Theorem \ref{max} and Theorem \ref{min}.
\end{proof}
This results provides an interesting physical insight. In order to observe a difference between commuting and tensor models of steering, local observer related to the characterized subsystem should not have an access to the all possible observables.

In fact one can extend the above result to cover a larger class of C*-algebras thoroughly studied in the literature. Recall that a C*-algebra has the weak expectation property (WEP)  if $$A\otimes_{max} C\subset B\otimes_{max} C$$ for any C*-algebras $C$, $B$ such that $A\subset B$ \cite{Brown}. The following proposition formulates relationship between the WEP and the universal C*-algebra $C^*(m,k)$.

 \begin{proposition}\label{WEP} Let $B$ be a C*-algebra. The following conditions are equivalent 
 \begin{enumerate}
 \item $B$ has the weak expectation property,
 \item $(C^*(m,k),B)$ is a nuclear pair for any $(m,k)\neq(2,2)$,
  \item $(C^*(m,k),B)$ is a nuclear pair for some $(m,k)\neq(2,2)$.
\end{enumerate}
\end{proposition}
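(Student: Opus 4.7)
The plan is to establish the cyclic implications $(1)\Rightarrow(2)\Rightarrow(3)\Rightarrow(1)$, with the last implication carrying essentially all of the technical content.

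For $(1)\Rightarrow(2)$, I would take the proof of the preceding $B(H)$ result as a template and observe that the Kirchberg free-product theorem used there holds more generally for any C*-algebra $B$ with WEP in place of $B(H)$: if $A_1\otimes_{max}B=A_1\otimes_{min}B$ and $A_2\otimes_{max}B=A_2\otimes_{min}B$, then $(A_1*A_2)\otimes_{max}B=(A_1*A_2)\otimes_{min}B$. Applied iteratively to the free-product decomposition $C^*(m,k)\cong C^*(\mathbb{Z}_k)*\cdots*C^*(\mathbb{Z}_k)$ with each factor $C^*(\mathbb{Z}_k)\cong\mathbb{C}^k$ nuclear (so that each forms a nuclear pair with $B$ automatically), this yields the nuclear pair $(C^*(m,k),B)$ for every scenario, in particular for every $(m,k)\neq(2,2)$. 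The implication $(2)\Rightarrow(3)$ is immediate, since condition $(2)$ is strictly stronger.

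For $(3)\Rightarrow(1)$, the crucial input is Kirchberg's second characterization of WEP: a unital C*-algebra $B$ has WEP if and only if $(C^*(F_n),B)$ is a nuclear pair for some (equivalently, every) $n\geq 2$, where $F_n$ denotes the free group on $n$ generators. So it suffices to transfer the nuclear pair property from $C^*(m,k)$ (for our fixed $(m,k)\neq(2,2)$) to $C^*(F_n)$ for some $n\geq 2$. The strategy is to exploit the fact that, precisely when $(m,k)\neq(2,2)$, the group $\mathbb{Z}_k^{*m}$ is non-amenable and contains a non-abelian free subgroup; for instance, for $k=2,m\geq 3$ the kernel of the word-length-mod-$2$ homomorphism $\mathbb{Z}_2^{*m}\to\mathbb{Z}_2$ is a free group of rank $m-1$, while for $k\geq 3$ a free subgroup arises via the standard ping-pong argument. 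The plan is then to produce a unital inclusion $C^*(F_n)\hookrightarrow C^*(m,k)$ together with a unital completely positive conditional expectation $E\colon C^*(m,k)\to C^*(F_n)$, at which point tensoring by $\mathrm{id}_B$ and using that both $E\otimes\mathrm{id}$ and the inclusion are u.c.p. in both the max and min tensor norms forces $(C^*(F_n),B)$ to be a nuclear pair, which is what we wanted.

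The main obstacle is arranging this conditional expectation for the full (not reduced) group C*-algebras, since $C^*(G)\to C^*(H)$ is not automatic from $H\leq G$ at the level of full C*-algebras. In the index-$2$ cases above one can exploit an action of $\mathbb{Z}_2$ on $C^*(m,k)$ (flipping parity of word length) and define $E$ as the fixed-point projection, which is manifestly u.c.p.\ and lands in the subalgebra generated by the even-length-word subgroup. For the general case one reduces to a situation already containing such a finite-index free subgroup by first passing through a suitable quotient or by using that a nuclear pair property descends along a u.c.p.\ section; alternatively one may bypass the embedding entirely by invoking that $C^*(m,k)$ has LLP (Boca's theorem applied to the free-product structure with nuclear factors) together with Kirchberg's duality between LLP and WEP, noting that non-nuclearity of $C^*(m,k)$ for $(m,k)\neq(2,2)$ is exactly what prevents the nuclear pair condition from being automatic and hence forces WEP on $B$.
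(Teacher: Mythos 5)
Your overall skeleton matches the paper's: $(1)\Rightarrow(2)$ exploits the free-product structure of $C^*(m,k)$ over nuclear factors together with a Kirchberg-type theorem, and $(3)\Rightarrow(1)$ transfers the nuclear-pair property to the full C*-algebra of a free subgroup of $*_{i=1}^m\mathbb{Z}_k$ and invokes Kirchberg's characterization of WEP. Two points in your execution need repair, however. In $(3)\Rightarrow(1)$, the ``main obstacle'' you identify is not actually an obstacle: for \emph{any} subgroup $H\leq G$ of a discrete group, the map on the group algebra sending $g\mapsto g$ for $g\in H$ and $g\mapsto 0$ otherwise extends to a u.c.p.\ conditional expectation $C^*(G)\rightarrow C^*(H)$ on the full group C*-algebras (induce the universal representation of $H$ up to $G$ and compress); this is exactly what makes $C^*(H)\subset C^*(G)$ max-injective, which is the fact the paper cites. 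Your parity-kernel trick covers only $k=2$, ``passing through a suitable quotient'' is not spelled out, and the proposed final bypass is invalid: from ``$C^*(m,k)$ has LLP, is non-nuclear, and $(C^*(m,k),B)$ is a nuclear pair'' one cannot conclude that $B$ has WEP. Kirchberg's theorem gives that LLP together with WEP yields a nuclear pair, but the converse characterization of WEP holds specifically for $C^*(\mathbb{F}_\infty)$ (or algebras containing it max-injectively), not for an arbitrary non-nuclear LLP algebra; so the free-group embedding really is needed and cannot be bypassed.

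In $(1)\Rightarrow(2)$, the generalized free-product statement you invoke (nuclear pairs with a fixed WEP algebra $B$ are stable under free products in the first variable) is not a standard theorem in that generality; the $B(H)$ case works because $A\otimes_{max}B(H)=A\otimes_{min}B(H)$ is equivalent to $A$ having LLP. What your application actually needs is only that a free product of nuclear (hence LP) algebras has the lifting property, combined with Kirchberg's theorem that such an algebra forms a nuclear pair with any WEP algebra --- which is, in substance, the route the paper takes via the lifting property of $C^*(m,k)$ and the nuclear pair $(C^*(\mathbb{F}_\infty),B)$. With these two repairs your argument becomes the paper's proof.
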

\begin{proof}To justify $(1)\Rightarrow (2)$ recall that $B$ has the weak expectation property if and only if $(C^*(\mathbb{F}_\infty),B)$ is a nuclear pair \cite{Brown} - here $C^*(\mathbb{F}_\infty)$ denotes the full group C*-algebra of a free group with countably many generators. Since $C^*(m,k)$ is a free product of $C^*(\mathbb{Z}_k)\simeq\mathbb{C}^k$, it admits the lifting property (see Proposition 3.21 in \cite{Ozawa0} based on \cite{Pisier96}), i.e. for any unital C*-algebra $A$ and any closed two-sided (hence self-adjoint) ideal $I\subset A$, any unital completely positive map $\mathcal{E}:C^*(m,k)\rightarrow A/I$ lifts to the unital completely positive map $\tilde{\mathcal{E}}:C^*(m,k)\rightarrow A$ such that $\mathcal{E}=q\circ \tilde{\mathcal{E}}$ where $q:A\rightarrow A/I$ stands for the quotient map \cite{Kirchberg}. Since $C^*(m,k)\simeq C^*(\mathbb{F}_\infty)/J$ for some closed two-sided ideal $J$ there exists a $*$-isomorphism $i:C^*(m,k)\rightarrow  C^*(\mathbb{F}_\infty)/J$ and its lift $\tilde{i}:C^*(m,k)\rightarrow  C^*(\mathbb{F}_\infty)$. Consider a map $q\otimes id:C^*(\mathbb{F}_\infty)\otimes_{max}B\rightarrow C^*(m,k)\otimes_{max} B$ where $q$ is a quotient map. As $C^*(\mathbb{F}_\infty)\otimes_{max}B=C^*(\mathbb{F}_\infty)\otimes_{min}B$ one can define the following composition of unital completely positive maps $q\circ\tilde{i}\otimes id:C^*(m,k)\otimes_{min}B\rightarrow C^*(m,k)\otimes_{max}B$ such that $||x||_{max}=||q\circ\tilde{i}\otimes id(x)||_{max}\leq ||x||_{min}$ for all $x\in C^*(m,k)\otimes_{alg}B$, so $C^*(m,k)\otimes_{max}B=C^*(m,k)\otimes_{min}B$.

Implication $(2)\Rightarrow (3)$ is obvious. To show $(3)\Rightarrow (1)$ observe that due to the Ping-Pong lemma (see also Lemma D.1 in \cite{Fritz}) $\mathbb{F}_{\infty}$ is a subgroup in $*_{i=1}^m\mathbb{Z}_k$ (for $(m,k)\neq (2,2)$) and as a consequence (see Remark 7.22 in \cite{Pisier}) there exists a max-injective embedding $C^*(\mathbb{F}_{\infty})\subset C^*(*_{i=1}^m\mathbb{Z}_k)\simeq C^*(m,k)$, i.e. $C^*(\mathbb{F}_\infty)\otimes_{max}C\subset C^*(m,k)\otimes_{max}C$ for any C*-algebra $C$. Therefore, nuclearity of $(C^*(m,k),B)$ implies nuclearity of $(C^*(\mathbb{F}_\infty),B)$.
\end{proof}

The following theorem provides a link between WEP C*-algebras and steering scenarios.

\begin{theorem}\label{mat_stab}Let $B$ be a unital C*-algebra, then $C^*(m,k)\otimes_{max}B=C^*(m,k)\otimes_{min} B$ if and only if $\mathbf{qA}(m,k,M_n(B))=\mathbf{qcA}(m,k,M_n(B))$ for all $n\in \mathbb{N}$ with given $(m,k)\neq (2,2)$.
\end{theorem}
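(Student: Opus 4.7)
\textbf{Forward direction.} Since $M_n(\mathbb{C})$ is nuclear, for every unital C*-algebra $D$ and tensor norm $\alpha\in\{\min,\max\}$ one has the canonical identification
$C^*(m,k)\otimes_\alpha M_n(D)\cong M_n(C^*(m,k)\otimes_\alpha D)$.
Applied with $D=B$, the hypothesis lifts to $C^*(m,k)\otimes_{\max}M_n(B)=C^*(m,k)\otimes_{\min}M_n(B)$ for every $n$, and Theorems~\ref{max} and~\ref{min} immediately yield the equality $\mathbf{qA}(m,k,M_n(B))=\mathbf{qcA}(m,k,M_n(B))$.

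\textbf{Backward direction.} Assume the assemblage equality for all $n$. The plan is to show that the canonical quotient $q:C^*(m,k)\otimes_{\max}B\to C^*(m,k)\otimes_{\min}B$ is injective, which reduces to showing that each $\phi\in S(C^*(m,k)\otimes_{\max}B)$ factors as $\phi=\hat\phi\circ q$ for some $\hat\phi\in S(C^*(m,k)\otimes_{\min}B)$. Given $\phi$ with GNS data $(\pi,K,\xi)$, the maximal tensor product decomposes $\pi=\pi_1\pi_2$ into commuting representations $\pi_1:C^*(m,k)\to B(K)$, $\pi_2:B\to B(K)$. For each $n$ and each tuple $\bar\eta=(\eta_1,\ldots,\eta_n)\in K^n$ with $\sum_i\|\eta_i\|^2=1$, the Gram-type expression
\[
\Phi_{n,\bar\eta}([z_{ij}])=\sum_{i,j}\langle\pi(z_{ij})\eta_j,\eta_i\rangle
\]
is a state on $M_n(C^*(m,k)\otimes_{\max}B)\cong C^*(m,k)\otimes_{\max}M_n(B)$; its induced quantum commuting assemblage on $M_n(B)$ is, by hypothesis, also quantum tensor, and Theorem~\ref{min} supplies $\hat\Phi_{n,\bar\eta}\in S(C^*(m,k)\otimes_{\min}M_n(B))$ matching $\Phi_{n,\bar\eta}$ on every element $P_{a|x}\otimes Y$ with $Y\in M_n(B)$. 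Taking $\eta_j=\pi_1(w_j)\xi$ for monomials $w_j$ in the generating projections and extracting matrix corners gives
\[
\phi\bigl(w_i^*P_{a|x}w_j\otimes b\bigr)=\hat\Phi_{n,\bar\eta}^{(i,j)}(P_{a|x}\otimes b),
\]
where $\hat\Phi_{n,\bar\eta}^{(i,j)}$ is a bounded functional on $C^*(m,k)\otimes_{\min}B$. Since every word in the projections admits such a splitting (using $P_{a|x}^*=P_{a|x}$), the values of $\phi$ on the entire algebraic tensor product $C^*(m,k)\otimes B$ are encoded in min-continuous functionals.

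\textbf{The main obstacle.} The delicate step is to synthesise from the family $\{\hat\Phi_{n,\bar\eta}^{(i,j)}\}$ of partial realisations a single state $\hat\phi\in S(C^*(m,k)\otimes_{\min}B)$ with $\phi=\hat\phi\circ q$. The proposed route is to index the data by directed finite families of monomials $w_1,\ldots,w_n$, extract a weak-$*$ cluster point of the diagonal-corner states $\hat\Phi_{n,\bar\eta}^{(1,1)}$ (taken with $\eta_1=\xi$) inside the compact set $S(C^*(m,k)\otimes_{\min}B)$, and use the consistency of matrix entries across enlargements of the tuple together with the universal property of $C^*(m,k)$ underlying Theorem~\ref{min} to verify that the cluster point satisfies $\phi=\hat\phi\circ q$ on every element of $C^*(m,k)\otimes B$.
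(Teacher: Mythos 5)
Your forward direction is correct and matches the paper: nuclearity of $M_n(\mathbb{C})$ gives $C^*(m,k)\otimes_{\alpha}M_n(B)\cong M_n\bigl(C^*(m,k)\otimes_{\alpha}B\bigr)$ for $\alpha\in\{\min,\max\}$, so the nuclear-pair hypothesis passes to $M_n(B)$ and Theorems \ref{max} and \ref{min} finish the job.

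The backward direction, however, has a genuine gap, and it sits exactly where you flag ``the main obstacle.'' Your construction correctly shows that each individual value $\phi(w_i^*P_{a|x}w_j\otimes b)$ is witnessed by a matrix corner of \emph{some} min-continuous state $\hat\Phi_{n,\bar\eta}$ on $C^*(m,k)\otimes_{\min}M_n(B)$. But the hypothesis only guarantees the \emph{existence} of a min-state agreeing with $\Phi_{n,\bar\eta}$ on the elements $P_{a|x}\otimes Y$; it provides no canonical choice and no compatibility between the $\hat\Phi_{n,\bar\eta}$ for different tuples, so the ``consistency of matrix entries across enlargements'' you invoke is not available. Worse, a weak-$*$ cluster point of the $(1,1)$-corner states with $\eta_1=\xi$ only controls the limit on $\mathrm{span}\{P_{a|x}\otimes b\}$: the values of $\phi$ on longer words were encoded in off-diagonal corners of \emph{other} states and do not survive the passage to that cluster point. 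So no single $\hat\phi$ with $\phi=\hat\phi\circ q$ is produced. The missing ingredient is precisely the theorem of Pisier (\cite{Pisier96}, see also \cite{Fritz}) that the paper invokes: because $C^*(m,k)$ is a full free product of the nuclear algebras $C^*(\mathbb{Z}_k)$, equality of the max- and min-norms on $M_n(S)$ for all $n$, where $S=\mathrm{span}\{P_{a|x}\otimes u: u\in U(B)\}$, already forces $C^*(m,k)\otimes_{\max}B=C^*(m,k)\otimes_{\min}B$; this extension/lifting result is what glues the local data, and your assemblage hypothesis supplies exactly the needed norm equality on $M_n(S)$ (an element of $M_n(S)$ is a sum of terms $P_{a|x}\otimes Y$ with $Y\in M_n(B)$, and states on the max and min completions take the same values on such elements). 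Without citing or reproving a result of this type, the direct GNS-plus-cluster-point route does not close.
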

\begin{proof}If $(C^*(m,k),B)$ is a nuclear pair then quantum commuting and quantum tensor models coincide for $M_n(B)$ with any $n\in \mathbb{N}$. 
To show the converse implication consider the set  $U(B)$ of all unitaries $u \in B$ and the following subspace 
$$S=\mathrm{span}\left\{P_{a|x}\otimes u:a=1,\ldots, k, x=1,\ldots,m ,u\in U(B)\right\}\subset C^*(m,k)\otimes_{alg}B.$$According to \cite{Pisier96} (see also Proposition B.14 in \cite{Fritz}) $C^*(m,k)\otimes_{max}B=C^*(m,k)\otimes_{min}B$ if and only if norms of self-adjoint elements in $M_n(S)$ coming from embeddings $M_n(S)\subset M_n(C^*(m,k)\otimes_{max}B)$ and $M_n(S)\subset M_n(C^*(m,k)\otimes_{min}B)$ coincide for all $n\in \mathbb{N}$. Observe that any $s\in M_n(S)$ is of the form $s=\sum_{a,x,i,j}P_{a|x}\otimes u_{a,x,i,j}\otimes E_{ij}$ where $u_{a,x,i,j}\in B$ and $E_{ij}\in M_n(\mathbb{C})$ stand for matrix units. Note that since $\mathbf{qA}(m,k,M_n(B))=\mathbf{qcA}(m,k,M_n(B))$ for any state $\phi\in S(C^*(m,k)\otimes_{max} M_n(B))$ there exists $\tilde{\phi}\in  S(C^*(m,k)\otimes_{min}M_n(B))$ such that $\phi(P_{a|x}\otimes b)=\tilde{\phi}(P_{a|x}\otimes b)$ for all $b\in M_n(B)$. If so, then $\phi(s)=\tilde{\phi}(s)$ for all $s\in M_n(S)$. Since for any self-adjoint element
$$||s||_{M_n(C^*(m,k)\otimes_{max}B)}=\sup \left\{|\phi(s)|: s\in S(C^*(m,k)\otimes_{max}M_n(B)) \right\}$$and
$$||s||_{M_n(C^*(m,k)\otimes_{min}B)}=\sup \left\{|\phi(s)|: s\in S(C^*(m,k)\otimes_{min}M_n(B)) \right\}$$ hence $||s||_{M_n(C^*(m,k)\otimes_{max}B)}=||s||_{M_n(C^*(m,k)\otimes_{min}B)}$ for arbitrary self-adjoint $s\in M_n(S)$.
\end{proof}

In the light of the above theorem and Proposition \ref{WEP} one could single out unital C*-algebra with the WEP as precisely those C*-algebras for which equivalence of quantum tensor and commuting models of steering remains stable under extension of a trusted system by arbitrary finite-dimensional ancilla (modeled by the tensor product with full matrix algebra $M_n(\mathbb{C})$). These observation could be seen as one of possible physical characterization of the WEP. Observe that due to Proposition \ref{WEP} equivalence coming from Theorem \ref{mat_stab} for some fixed $(m,k)\neq (2,2)$ implies that the same equivalence holds for all $(m,k)\neq(2,2)$.

Theorem \ref{mat_stab} describes the gap between quantum tensor and commuting models of steering in a nonconstructive way. In particular, for the certain family of C*-algebras coming out of non-amenable groups. From now on while talking about discrete groups we will always assume  countability.

\begin{corollary}\label{group}Consider any pair $(m,k)\neq (2,2)$. Let $G$ be a discrete group, then $G$ is not amenable if and only if there exists $n\in \mathbb{N}$ such that $\mathbf{qA}(m,k,M_n(C^*_r(G)))\neq\mathbf{qcA}(m,k,M_n(C^*_r(G)))$ where $C^*_r(G)$ denotes the reduced group C*-algebra of $G$.
\end{corollary}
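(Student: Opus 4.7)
The plan is to chain together Theorem \ref{mat_stab}, Proposition \ref{WEP}, and the classical equivalence between amenability of a discrete group and the weak expectation property of its reduced $C^*$-algebra. The corollary is a bookkeeping step once this chain is in place.

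First I would apply Theorem \ref{mat_stab} with $B=C^*_r(G)$: the existence of an $n$ witnessing $\mathbf{qA}(m,k,M_n(C^*_r(G)))\neq \mathbf{qcA}(m,k,M_n(C^*_r(G)))$ is equivalent to the strict inequality $C^*(m,k)\otimes_{max}C^*_r(G)\neq C^*(m,k)\otimes_{min}C^*_r(G)$, i.e.\ to the failure of $(C^*(m,k),C^*_r(G))$ to be a nuclear pair. Since we are given $(m,k)\neq(2,2)$, Proposition \ref{WEP} rephrases this as the statement that $C^*_r(G)$ does \emph{not} have the weak expectation property. Thus the corollary reduces to the known equivalence: $C^*_r(G)$ has WEP if and only if $G$ is amenable.

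The easy direction (amenable $\Rightarrow$ WEP) follows from Lance's theorem: amenability of $G$ makes $C^*_r(G)$ nuclear, and nuclearity trivially implies WEP; in particular both assemblage sets then coincide for every $n$. For the converse I would invoke the standard equivalence that a unital $C^*$-algebra $A$ has WEP if and only if its enveloping von Neumann algebra $A^{**}$ is injective (semidiscrete). Applying this to $A=C^*_r(G)$, WEP of $C^*_r(G)$ forces injectivity of $C^*_r(G)^{**}$. The group von Neumann algebra $L(G)$ appears as the range of the normal extension of the left regular representation to $C^*_r(G)^{**}$, hence as a direct summand $zC^*_r(G)^{**}$ for a suitable central projection $z$; a direct summand of an injective von Neumann algebra is injective, so $L(G)$ is injective. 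Connes' theorem on injective factors then gives amenability of $G$. Taking contrapositives closes the equivalence.

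The only real obstacle is the last chain ``WEP $\Rightarrow$ $C^*_r(G)^{**}$ injective $\Rightarrow$ $L(G)$ injective $\Rightarrow$ $G$ amenable''; each link is standard but relies on non-trivial background (bidual characterization of WEP and Connes' theorem). I would simply quote these, since nothing specific to steering is needed beyond the two structural results already proven in the paper.
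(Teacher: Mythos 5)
Your reduction is exactly the one the paper uses: apply Theorem \ref{mat_stab} with $B=C^*_r(G)$ to translate the assemblage statement into (non-)nuclearity of the pair $(C^*(m,k),C^*_r(G))$, then Proposition \ref{WEP} to translate that into (failure of) WEP for $C^*_r(G)$, and finally the equivalence ``$C^*_r(G)$ has WEP $\iff$ $G$ is amenable.'' The paper simply cites this last equivalence (Corollary 9.29 in \cite{Pisier}); the easy direction via nuclearity is the same in both arguments.

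The problem is your proposed proof of the hard direction of that last equivalence. The ``standard equivalence'' you invoke --- that a unital C*-algebra $A$ has WEP if and only if $A^{**}$ is injective --- is false. Injectivity of $A^{**}$ characterizes \emph{nuclearity} of $A$ (Choi--Effros, Connes), and WEP is strictly weaker: $B(H)$ has WEP (being injective, the canonical inclusion $B(H)\hookrightarrow B(H)^{**}$ already serves as a weak expectation), yet $B(H)$ is not nuclear for infinite-dimensional $H$, so $B(H)^{**}$ is not injective. Hence the first link of your chain ``WEP $\Rightarrow$ $C^*_r(G)^{**}$ injective'' does not hold as stated; if it did, you would already have nuclearity of $C^*_r(G)$ and Lance's theorem would finish immediately, with no need for Connes' theorem. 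The conclusion you are after is nevertheless true, and a correct argument composes the weak expectation $\Phi:B(\ell^2 G)\rightarrow C^*_r(G)^{**}$ with the canonical normal $*$-homomorphism $C^*_r(G)^{**}\rightarrow L(G)$ (cutting by the central cover of the left regular representation, as you indicate); the resulting u.c.p.\ map is a $C^*_r(G)$-bimodule map, and evaluating it on multiplication operators $M_f$, $f\in\ell^\infty(G)$, against the vector $\delta_e$ and using that vector states $\langle\delta_h,\cdot\,\delta_h\rangle$ agree on $L(G)$ produces a left-invariant mean on $\ell^\infty(G)$. Either supply that argument or, as the paper does, cite the known result directly.
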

\begin{proof}If $G$ is amenable then $C^*_r(G)$ is nuclear, so $\mathbf{qA}(m,k,M_n(C^*_r(G)))=\mathbf{qcA}(m,k,M_n(C^*_r(G)))$ for any $n\in \mathbb{N}$. Conversely, if quantum tensor and quantum commuting frameworks coincide for $M_n(C_r^*(G))$ with any $n\in \mathbb{N}$, then according to Theorem \ref{mat_stab}, $(C^*(m,k),C_r^*(G))$ is a nuclear pair, so by Proposition \ref{WEP}, $C^*_r(G)$ has the weak expectation property. Due to Corollary 9.29 in \cite{Pisier} $G$ is then amenable.
\end{proof}Note that Corollary \ref{group} witnesses the existence of a gap between commuting and tensor models, in a way that is not based on particular and ad-hoc construction of assemblages (or steering inequalities) but it comes directly from the structural properties of underlying algebras. For any discrete, non-amenable group $G$ there exists $n\in \mathbb{N}$ such that $M_n(C^*_r(G))$ distinguishes quantum commuting and tensor models (for a given scenario $(m,k)$ if and only if $(m,k)\neq (2,2)$). In fact the previously known examples \cite{Nava, Yan} providing the gap between the models could be seen as a particular case (with $n=1$) of Corollary \ref{group}. Indeed, it has been shown \cite{Nava} that for $H=\ell^2(*_{i=1}^m\mathbb{Z}_2)$ there is a fixed choice of PVMs elements $Q_{b|y}\in B(H)$ such that 
\begin{equation}\label{look}
I\left(\left\{p(ab|xy)\right\}_{a,b,x,y}\right)=\frac{1}{m}\sum_{x=1}^mp(a=b|xx)-p(a\neq b|xx)\leq \frac{2\sqrt{m-1}}{m}
\end{equation}
for quantum tensor model of correlations, while $I\left(\left\{p(ab|xy)\right\}_{a,b,x,y}\right)=1$ for quantum commuting model of correlations (where $a,b=1,2$ and  $x,y=1,\ldots, m$). As these reasoning can be translated into C*-algebraic framework, it shows that $\mathbf{qA}(m,2,B)\neq \mathbf{qcA}(m,2,B)$ for C*-algebra $B=C^*_r(\ast_{i=1}^m\mathbb{Z}_2)$ as long as $m\geq 3$. Similarly, \cite{Yan} provides construction of an assemblage beyond quantum tensor description for the case of C*-probability spaces $(B,\phi)$ and assemblages constructed from faithful tracial state $\phi\in S(B)$ and $m$ freely independent self-adjoint unitaries - specifically,  for paradigmatic example $B=C_r^*(*_{i=1}^m\mathbb{Z}_2)$ with $m\geq 5$.

One can justify that $\mathbf{qA}(m,k,B)\neq \mathbf{qcA}(m,k,B)$ implies $\mathbf{qA}(m',k',B)\neq \mathbf{qcA}(m',k',B)$ for $m'\geq m$ and $k'\geq k$. Therefore, note that Corollary \ref{group} for the first time implies (non-constructively) existence of quantum commuting assemblages with no tensor realization for two-measurements scenarios beyond binary outcomes assumption, i.e. $(2,k)\neq (2,2)$.

It should be emphasized that the difference between tensor product and commuting paradigms for steering scenarios is not simply a result of difference between sets of bipartite correlations described within both approaches (such a separation follows from the refutation of Tsirelson’s conjecture \cite{Re}, however, it is not known for which values of $(m,k)$ one has $\mathcal{C}_{qa}(m,k)\neq \mathcal{C}_{qc}(m,k)$). It constitutes an independent phenomena. To see this recall that a C*-algebras $A$ is QWEP if there is a C*-algebra $B$ with the WEP and a two-sided closed ideal $I\subset B$ such that $A\simeq B/I$.
\begin{theorem}For any $(m,k)\neq (2,2)$ there exists a unital C*-algebra $B$ such that $\mathbf{qA}(m,k,B)\neq \mathbf{qcA}(m,k,B)$, but every behaviour $P=\left\{p(ab|xy)\right\}_{a,b,x,y}$ of the form 
$$p(ab|xy)=\sigma_{a|x}(M_{b|y})$$with $\left\{\sigma_{a|x}\right\}_{a,x}\in \mathbf{qcA}(m,k,B)$ and POVMs elements $M_{b|y}\in B$, fulfills $P\in \mathcal{C}_{qa}(m,k)$. In particular, if $C$ is $\mathrm{QWEP}$ but not $\mathrm{WEP}$, then there exists $n \in \mathbb{N}$ such that $B=M_n(C)$ fulfills the above.

\end{theorem}
\begin{proof}Let $B$ be an arbitrary C*-algebra with the WEP (without loss of generality we may assume that $B$ is unital) and let $I\subset B$ be an arbitrary two-sided closed ideal, such that $B/I$ is a unital C*-algebra. For any $n\in \mathbb{N}$ consider $M_n(B/I)\simeq M_n(B)/M_n(I)$. Since  $M_n(B/I)$ are QWEP, for any $\left\{\sigma_{a|x}\right\}_{a,x}\in \mathbf{qcA}(m,k,M_n(B/I))$ and POVMs elements $M_{b|y}\in M_n(B/I)$, we have $P=\left\{p(ab|xy)\right\}_{a,b,x,y}\in C_{qa}(m,k)$ for $p(ab|xy)=\sigma_{a|x}(M_{b|y})$. Indeed, there exists a UCP map $\Phi:C^*(m,k)\rightarrow M_n(B/I)$ such that $\Phi(P_{b|y})=M_{b|y}$ and since $C^*(m,k)$ admits the lifting property there exists a UCP $\tilde{\Phi}:C^*(m,k)\rightarrow M_n(B)$ such that $\Phi=q\circ \tilde{\Phi}$ where $q:M_n(B)\rightarrow M_n(B/I)$ stands for the quotient map. Therefore, $\sigma_{a|x}(M_{b|y})=\phi\circ(id\otimes q)(P_{a|x}\otimes \tilde{\Phi}(P_{b|y}))$
with $\phi\circ(id\otimes q)\in S(C^*(m,k)\otimes_{min}C^*(m,k))$ as $(C^*(m,k),M_n(B))$ is a nuclear pair since $M_n(B)$ has the WEP.

Assume for now that $\mathbf{qA}(m,k,M_n(B/I))=\mathbf{qcA}(m,k,M_n(B/I))$ for all $n\in \mathbb{N}$. Due to the steering-based characterization of WEP in Theorem \ref{mat_stab}, this would imply that $B/I$ has WEP, so in fact WEP and QWEP would be equivalent notions. However, this is not the case, as for example the reduced group C*-algebra $C^*_r(SL_2(\mathbb{Z}))$ is QWEP but not WEP \cite{Kirchberg}. 
\end{proof}

\color{black}

Finally, note that the previous discussion provides a useful tool for the analysis of tensor products of C*-algebras. If $\mathbf{qA}(m,k,M_n(B))\neq \mathbf{qcA}(m,k,M_n(B))$ for some $n$ (in particular for $n=1$) then the difference between tensor and commuting models become a witness of non-nuclearity of pairs of C*-algebras. For example, steering inequality (\ref{look}) constructed in \cite{Nava} for unital C*-algebra $B=C^*_r(*_{i=1}^m\mathbb{Z}_2)$ shows that \begin{equation}\nonumber
C^*(m,2)\otimes_{min}C_r^*(*_{i=1}^m\mathbb{Z}_2)\neq C^*(m,2)\otimes_{max}C_r^*(*_{i=1}^m\mathbb{Z}_2)
\end{equation}for any $m\geq 3$.

We conclude structural comments with a remark concerning the family of specific subalgebras that behave well under partial order imposed by the notion of positive elements. Recall that a C*-subalgebra $A\subset B$ is hereditary subalgebra of a C*-algebra $B$ if for any positive elements $a\in A$ and $b\in B$ relation $b\leq a$ implies that $b\in A$. Note that for this class of C*-algebras there exists a natural embedding $A\otimes_{max}C\subset B\otimes_{max} C$ for arbitrary C*-algebra $C$ \cite{Brown}. This property leads to the following observation based on Theorem \ref{mat_stab}.

\begin{corollary}\label{hered}Let $B$ a unital C*-algebra such that $\mathbf{qA}(m,k,M_n(B))=\mathbf{qcA}(m,k,M_n(B))$ for any $n\in \mathbb{N}$, then any unital C*-algebra $C$ that is $*$-isomorphic with some hereditary subalgebra of $B$ has WEP, hence $\mathbf{qA}(m,k,C)=\mathbf{qcA}(m,k,C)$.
\end{corollary}

Observe that for a given unital C*-algebra $B$, $M_n(B)$ can be seen a hereditary C*-subalgebra of $M_l(B)$ for $l>n$. Using Corollary \ref{hered} above characterization of WEP leads to the following result.

 \begin{corollary}\label{WEP-hered} Let $B$ be a unital C*-algebra. The following conditions are equivalent 
 \begin{enumerate}
 \item $B$ has the weak expectation property,
 \item there exists $n\in \mathbb{N}$ such that $\mathbf{qA}(m,k,M_{nl}(B))=\mathbf{qcA}(m,k,M_{nl}(B))$ for all $l\in \mathbb{N}$,
  \item there exists $n\in \mathbb{N}$ such that $\mathbf{qA}(m,k,M_{l}(B))=\mathbf{qcA}(m,k,M_{l}(B))$ for all $l\geq n$.
\end{enumerate}
\end{corollary}

We close this section with a short discussion of the relevance of Theorem \ref{mat_stab} for a large class of models considered within the framework of algebraic quantum field theory, where space-like separated subsystems are often described by hyperfinite von Neumann algebras. The following result provides a stronger version of Proposition 53 from \cite{Luijk0} (and anticipated in \cite{Scholz}). We include it here also for a reader convenience, to avoid the approximation required in the proof of the aforementioned proposition in \cite{Luijk0} which is not explicitly argued. 

\begin{corollary}Let $\mathcal{M}$ stands for a hyperfinite von Neumann algebra, then $\mathbf{qcA}(m,k,\mathcal{M})=\mathbf{qA}(m,k,\mathcal{M})$ for all $(m,k)$. As a consequence $P=\left\{p(ab|xy)\right\}_{a,b,x,y}\in \mathcal{C}_{qa}(m,k)$ for all bipartite Bell-type scenarios involving $\mathcal{M}$ as one of commuting algebras describing local subsystems.
\end{corollary}
\begin{proof}
Observe that if $\mathcal{M}$ is hyperfinite, then it is injective which shows that $\mathcal{M}$ has the weak expectation property (WEP). The statement then follows imediately from Theorem \ref{mat_stab}.
\end{proof}

Let us also remark that the provided difference between the quantum commuting and the quantum tensor models for steering remains unchanged when one restrict Definitions \ref{qt} and \ref{qc} only to normal states, i.e if $\mathbf{qA}(m,k,B)\neq \mathbf{qcA}(m,k,B)$, then respective subsets of assemblages in tensor and commuting models obtainable with measurements on states (for a joint system) defined through density matrices differs as well.
\color{black}

\section{Bipartite correlations from no-signaling steering}
Consider an experimental steering scenario described by some no-signaling assemblage $\Sigma=\left\{\sigma_{a|x}\right\}_{a,x}\in \mathbf{nsA}(m,k,B)$. If a trusted party performs additional local measurements (choosing settings at random) on subsystem $B$, then statistical characterization of such procedure would be given by some bipartite correlations admitting no-signaling constrains. Indeed, for POVMs elements $M_{b|y}\in B$ one obtains behaviour defined by $p(ab|xy)=\sigma_{a|x}(M_{b|y})$ and it is easy to see that 
\begin{equation}\nonumber
\sum_ap(ab|xy)=p(b|y),\ \ \ \sum_bp(ab|xy)=p(a|x)
\end{equation}for arbitrary measurements and outcomes, i.e. the marginal conditional probabilities for both subsystems are well defined.
The natural question arises whether such $P=\left\{p(ab|xy)\right\}_{a,b,x,y}\in \mathcal{C}_{ns}(m,k)$ can be post-quantum (i.e. whether it does not admit quantum realization in commuting model).

It is a well-know fact that bipartite no-signaling correlations form a polytope that even in the simplest scenario (with two binary measurements, i.e. $m=2,k=2$) does not coincide with a closed convex sets of all correlations admitting quantum commuting model \cite{PR}. One can say even more - there is no quantum commuting realization of extreme points of the no-signaling polytope unless the case where extreme point admits a local hidden variable (LHV) model \cite{RTHHPRL}. For that reason one could in principle expect the separation between correlations obtained by measurements performed on quantum and no-signaling assemblages.

Nevertheless, the following theorem provides a no-go type result regarding the nature of correlations that can be realized within no-signaling steering in the C*-algebraic framework.

\begin{theorem}\label{main}Let $\Sigma=\left\{\sigma_{a|x}\right\}_{a,x}\in \mathbf{nsA}(m,k,B)$ be any no-signaling assemblage and $N_{b|y}\in B$ denote elements of any POVMs. Consider behaviour $P=\left\{\sigma_{a|x}(N_{b|y})\right\}_{a,b,x,y}$, then $P\in \mathcal{C}_{qc}(m,k)$.
\end{theorem}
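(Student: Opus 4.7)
The plan is to build a state on $C^*(m,k)\otimes_{max} C^*(m,k)$ that evaluates on the canonical generating projectors to $p(ab|xy)$, and then apply GNS to recover a quantum commuting model on a single Hilbert space. The starting point is the no-signaling marginal $\sigma:=\sum_a\sigma_{a|x}\in S(B)$, which is well-defined precisely because of Definition \ref{ns}.

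First I would pass to the GNS triple $(\pi_\sigma,H_\sigma,\Omega_\sigma)$ of $\sigma$. Since each $\sigma_{a|x}$ is positive and $\sigma-\sigma_{a|x}=\sum_{a'\neq a}\sigma_{a'|x}$ is positive as well, the domination $0\leq\sigma_{a|x}\leq\sigma$ holds. The non-commutative Radon--Nikodym theorem (Sakai) then produces, uniquely for each pair $(a,x)$, an operator $T_{a|x}\in\pi_\sigma(B)'$ with $0\leq T_{a|x}\leq\jed$ and
$$
\sigma_{a|x}(b)=\langle\Omega_\sigma,T_{a|x}\pi_\sigma(b)\Omega_\sigma\rangle \qquad (b\in B).
$$
Since both $\sum_a T_{a|x}$ and $\jed$ implement $\sigma$ through this formula, uniqueness of the derivative forces $\sum_a T_{a|x}=\jed$, so $\{T_{a|x}\}_a$ is a POVM inside the commutant $\pi_\sigma(B)'$ for every $x$.

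Next I would assemble these data into two UCP maps with commuting ranges. For each $x$ the assignment $P_{a|x}\mapsto T_{a|x}$ defines a UCP map $C^*(\bZ_k)\to\pi_\sigma(B)'$ (automatically completely positive because the domain is abelian), and Theorem \ref{boca} glues them into a single UCP map $\alpha:C^*(m,k)\to\pi_\sigma(B)'$. The analogous construction applied to the POVMs $\{\pi_\sigma(N_{b|y})\}_b$ in $\pi_\sigma(B)$ yields $\beta:C^*(m,k)\to\pi_\sigma(B)$. The ranges of $\alpha$ and $\beta$ lie in the mutually commuting subalgebras $\pi_\sigma(B)'$ and $\pi_\sigma(B)$, so the commuting-maps property of the maximal tensor product delivers a UCP map $\gamma:C^*(m,k)\otimes_{max}C^*(m,k)\to B(H_\sigma)$ with $\gamma(u\otimes v)=\alpha(u)\beta(v)$. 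Composition with the vector state $\langle\Omega_\sigma,\cdot\,\Omega_\sigma\rangle$ yields a state $\phi$ on $C^*(m,k)\otimes_{max}C^*(m,k)$ satisfying $\phi(P_{a|x}\otimes P_{b|y})=\sigma_{a|x}(N_{b|y})=p(ab|xy)$.

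Finally I would apply GNS to $\phi$: the resulting representation $\rho:C^*(m,k)\otimes_{max}C^*(m,k)\to B(K)$ with unit cyclic vector $\xi$ sends $P_{a|x}\otimes\jed$ and $\jed\otimes P_{b|y}$ to commuting projections $\tilde P_{a|x},\tilde Q_{b|y}\in B(K)$, and the identity $p(ab|xy)=\langle\xi,\tilde P_{a|x}\tilde Q_{b|y}\xi\rangle$ exhibits $P$ as an element of $\cC_{qc}(m,k)$. I expect the main obstacle to be the Radon--Nikodym step, where one must verify both that the dominating operators land in the commutant $\pi_\sigma(B)'$ (which is what ultimately allows them to commute with the range of $\beta$) and that the functional identity $\sum_a\sigma_{a|x}=\sigma$ is promoted to the operator identity $\sum_a T_{a|x}=\jed$ via uniqueness. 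Once that is in hand, Theorem \ref{boca} and the commuting-maps property of $\otimes_{max}$ assemble the rest automatically.
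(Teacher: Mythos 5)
Your proposal is correct and follows essentially the same route as the paper: a GNS representation with respect to the marginal $\sigma=\sum_a\sigma_{a|x}$ together with a Sakai--Radon--Nikodym argument placing operators $T_{a|x}$ with $\sum_aT_{a|x}=\jed$ in the commutant $\pi_\sigma(B)'$ (the paper carries out this step by hand via bounded sesquilinear forms and obtains the normalization by a density argument rather than by uniqueness of the derivative). Your concluding Boca/$\otimes_{max}$/second-GNS step, which upgrades the commuting POVMs to the commuting PVMs required by the paper's definition of $\mathcal{C}_{qc}(m,k)$, is left implicit in the paper's proof of this theorem (it is essentially the content of Theorem \ref{main_2_two}), so your version is, if anything, slightly more complete on that point.
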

\begin{proof}Consider a particular element $\sigma_{a|x}$ from a given no-signaling assemblage $\Sigma=\left\{\sigma_{a|x}\right\}_{a,x}$ and define the following state $\sigma=\sum_a \sigma_{a|x}$. Clearly, $\sigma_{a|x}\leq \sigma$. Define left ideals $I_\sigma=\left\{b\in B:\sigma(b^*b)=0\right\}$ and $I_{\sigma_{a|x}}=\left\{b\in B:\sigma_{a|x}(b^*b)=0\right\}$. Consider now a GNS   representation $(\pi, H,\Omega)$ of a unital C*-algebra $B$ defined with respect to $\sigma$. It easy to see that $I_{\sigma}\subset I_{\sigma_{a|x}}$ and as a consequence the following formula
$$\omega_{a|x}(\pi(b)\Omega,\pi(c)\Omega)=\sigma_{a|x}(b^*c)$$
give a rise to the well-defined sesquilinear form $\omega_{a|x}$ on a dense subspace $\left\{\pi(b)\Omega:b\in B\right\}\subset H$.

Next part of our reasoning is based on the version of Nikodym-Sakai theorem (see \cite{Kadison}). Indeed, the following chain of inequalities$$|\omega_{a|x}(\pi(b)\Omega,\pi(c)\Omega)|^2=|\sigma_{a|x}(b^*c)|^2\leq \sigma_{a|x}(b^*b)\sigma_{a|x}(c^*c)\leq \sigma(b^*b)\sigma(c^*c)=||\pi(b)\Omega||^2||\pi(c)\Omega||^2$$
valid for all $b,c\in B$ shows that $\omega_{a|x}$ is a bounded sesquilinear form on $\left\{\pi(b)\Omega:b\in B\right\}$ so it can be uniquely extend to the bounded sesquiliner form $\tilde{\omega}_{a|x}$ on $H$. Hence there exists a positive operator $M_{a|x}\in B(H)$ such that 
\begin{equation}\nonumber
\tilde{\omega}_{a|x}(\xi,\eta)=\langle\xi,M_{a|x}\eta\rangle.
\end{equation}Note that in particular,
\begin{equation}\nonumber
\langle\pi(c)\Omega,M_{a|x}\pi(d)^*\pi(b)\Omega\rangle=\sigma_{a|x}(c^*d^*b)=\sigma_{a|x}((dc)^*b)=\langle\pi(d)\pi(c)\Omega,M_{a|x}\pi(b)\Omega\rangle
\end{equation}for any $b,c,d\in B$, so $M_{a|x}\in \pi(B)'$ since $\left\{\pi(b)\Omega:b\in B\right\}$ is dense in $H$. Therefore, $\sigma_{a|x}(b)=\langle\Omega,\pi(b)M_{a|x}\Omega\rangle$ for any $b\in B$.

Now, observe that for any $b,c\in B$ we get
\begin{equation}\nonumber
\langle\pi(b)\Omega,\pi(c)\Omega\rangle=\sigma(b^*c)=\sum_a\sigma_{a|x}(b^*c)=\sum_a\langle\pi(b)\Omega,M_{a|x}\pi(c)\Omega\rangle=\Biggl\langle\pi(b)\Omega,\left(\sum_aM_{a|x}\right)\pi(c)\Omega\Biggl\rangle
\end{equation}and as a consequence 
\begin{equation}\nonumber
\Biggl\langle\pi(b)\Omega,\left(\mathds{1}-\sum_aM_{a|x}\right)\pi(c)\Omega\Biggl\rangle=0
\end{equation}for all $b,c\in B$. As $\left\{\pi(b)\Omega:b\in B\right\}$ is dense in $H$, it follows that $\sum_a M_{a|x}=\mathds{1}$, so $M_{a|x}$ form a POVM for any $x$. If so, then for any POVMs elements $N_{b|y}\in B$ we obtain
\begin{equation}\nonumber
p(ab|xy)=\sigma_{a|x}(N_{b|y})=\langle\Omega,\pi(N_{b|y})M_{a|x}\Omega\rangle.
\end{equation}where $\sum_b \pi(N_{b|y})=\mathds{1}$ (as $\pi$ is a unital $*$-homomorphism) and the proof is concluded.
\end{proof}

In fact analysis of the above proof leads to the stronger statement.

\begin{theorem}\label{main_2_two}
Let $B$ be a unital C*-algebra, then $\mathbf{qcA}(m,k,B)=\mathbf{nsA}(m,k,B)$ for any $m,k$.
\end{theorem}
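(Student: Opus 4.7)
The inclusion $\mathbf{qcA}(m,k,B)\subseteq \mathbf{nsA}(m,k,B)$ is already recorded, so the content of the theorem is the reverse inclusion. My plan is to observe that the proof of Theorem \ref{main} does essentially all the work: given any $\Sigma=\{\sigma_{a|x}\}\in\mathbf{nsA}(m,k,B)$, that construction produces a Hilbert space $H$, a unital $*$-representation $\pi:B\to B(H)$ with cyclic unit vector $\Omega$ (the GNS data for $\sigma=\sum_a\sigma_{a|x}$), and a family of positive operators $\{M_{a|x}\}\subset\pi(B)'$ satisfying $\sum_aM_{a|x}=\mathds{1}$ and
\[
\sigma_{a|x}(b)=\langle\Omega,\pi(b)M_{a|x}\Omega\rangle\qquad(b\in B).
\]
So I already have POVMs in the commutant of $\pi(B)$ realising $\Sigma$ as a vector state.

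Next, I would repackage this data to match one of the equivalent formulations of $\mathbf{qcA}$ from Theorem \ref{max}. Let $A$ be any unital C*-subalgebra of $\pi(B)'$ containing all the $M_{a|x}$ (for definiteness, the one they generate together with the unit). Because $A$ and $\pi(B)$ have commuting ranges inside $B(H)$, the universal property of the maximal tensor product supplies a unital $*$-homomorphism
\[
\mu:A\otimes_{\max}B\longrightarrow B(H),\qquad \mu(a\otimes b)=a\,\pi(b).
\]
Composing with the vector state at $\Omega$ yields $\phi:=\omega_\Omega\circ\mu\in S(A\otimes_{\max}B)$, and by construction
\[
\phi(M_{a|x}\otimes b)=\langle\Omega,M_{a|x}\pi(b)\Omega\rangle=\langle\Omega,\pi(b)M_{a|x}\Omega\rangle=\sigma_{a|x}(b),
\]
using $M_{a|x}\in\pi(B)'$. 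This is exactly condition (3) of Theorem \ref{max}, so $\Sigma\in\mathbf{qcA}(m,k,B)$.

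There is really no hard step to overcome — Theorem \ref{main}'s GNS/Nikodym--Sakai argument already delivers the commuting POVM picture. The only subtlety worth flagging is that the GNS representation $\pi$ need not be injective, so one cannot appeal directly to Definition \ref{qc} (which demands a unital inclusion $B\subset B(H)$). This is precisely why the POVM, maximal-tensor-product reformulation of $\mathbf{qcA}$ provided by Theorem \ref{max} is used: it lets $\pi(B)$ stand in for $B$ without requiring faithfulness, and absorbs the commutation condition $P_{a|x}\in B'$ into the universal property of $\otimes_{\max}$.
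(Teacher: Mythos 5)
Your proposal is correct and follows essentially the same route as the paper: both extract the commutant-valued POVM realization $\sigma_{a|x}(b)=\langle\Omega,\pi(b)M_{a|x}\Omega\rangle$ from the proof of Theorem \ref{main} and then repackage it as a state on a maximal tensor product. The only cosmetic difference is that the paper invokes Boca's theorem directly to lift the $M_{a|x}$ to a unital completely positive map on $C^*(m,k)$, whereas you land in condition (3) of Theorem \ref{max} and let that theorem (which uses Boca internally) do the conversion.
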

\begin{proof}As $\mathbf{qcA}(m,k,B)\subset \mathbf{nsA}(m,k,B)$ it is enough to show the converse inclusion. Consider $\Sigma=\left\{\sigma_{a|x}\right\}_{a,x}\in \mathbf{nsA}(m,k,B)$, then according to proof of Theorem \ref{main}, there exists a unital $*$-homomorphism $\pi:B\rightarrow B(H)$, a vector state $\phi\in S(B(H))$ and POVMs elements $M_{a|x}\in \pi(B)'$ such that $\sigma_{a|x}(\cdot)=\phi(M_{a|x}\pi(\cdot))$. Let $\tilde{\pi}:C^*(m,k)\rightarrow \pi(B)'\subset B(H)$ denotes a unital completely positive map such that $\tilde{\pi}(P_{a|x})=M_{a|x}$ for all generating projectors $P_{a|x}$ (existence of this map follows from Theorem \ref{boca}). Since $\pi$ and $\tilde{\pi}$ have commuting ranges there exists a unital completely positive map $\Pi:C^*(m,k)\otimes_{max}B\rightarrow B(H)$ such that $\Pi(P_{a|x}\otimes b)=M_{a|x}\pi(b)$ for any $b\in B$. Therefore, there exists a state $\tilde{\phi}=\phi\circ \Pi
\in S(C^*(m,k)\otimes_{max}B)$ such that $\sigma_{a|x}(\cdot)=\tilde{\phi}(P_{a|x}\otimes\cdot )$.
\end{proof}

The equivalence of quantum commuting and no-signaling constraints for assemblages immediately leads to the following characterization of the set of bipartite correlations that admit a quantum commuting realization.
\begin{corollary}\label{correlations_col}
Behaviour $P$ admits a realization in a quantum commuting model, i.e. $P=\left\{p(ab|xy)\right\}_{a,b,x,y}\in \mathcal{C}_{qc}(m,k)$ if and only if there exists a unital C*-algebra $B$ together with POVMs elements $M_{b|y}\in B$ and a no-signaling assemblage $\Sigma=\left\{\sigma_{a|x}\right\}_{a,x}\in \mathbf{nsA}(m,k,B)$ such that
\begin{equation}\nonumber
p(ab|xy)=\sigma_{a|x}(M_{b|y}).
\end{equation}
\end{corollary}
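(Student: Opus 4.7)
The plan is to prove the two implications separately, using Theorem \ref{main_2_two} together with the equivalent characterizations of Theorem \ref{max}.

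For the forward implication ($\Rightarrow$), I would unpack the definition of $\mathcal{C}_{qc}(m,k)$ to obtain a Hilbert space $H$, unit vector $\Psi\in H$, and commuting PVMs $\{P_{a|x}\},\{Q_{b|y}\}\subset B(H)$ with $p(ab|xy)=\langle\Psi|P_{a|x}Q_{b|y}|\Psi\rangle$. Take $B=B(H)$ (or any unital C*-subalgebra containing all $Q_{b|y}$) and set $M_{b|y}:=Q_{b|y}$. Define $\sigma_{a|x}$ on $B$ by $\sigma_{a|x}(b):=\langle\Psi|P_{a|x}b|\Psi\rangle$. Positivity is immediate from $[P_{a|x},b]=0$: for $b=c^*c$ one has $\sigma_{a|x}(c^*c)=\langle\Psi|c^*P_{a|x}c|\Psi\rangle=\|cP_{a|x}\Psi\|^2\geq 0$. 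Summing and using $\sum_a P_{a|x}=\mathds{1}$ gives $\sum_a\sigma_{a|x}(\cdot)=\langle\Psi|\cdot|\Psi\rangle$, which is $x$-independent, so $\Sigma=\{\sigma_{a|x}\}\in\mathbf{nsA}(m,k,B)$. Finally $\sigma_{a|x}(M_{b|y})=p(ab|xy)$ by construction.

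For the backward implication ($\Leftarrow$), suppose we are given $B$, POVM elements $\{M_{b|y}\}\subset B$, and $\Sigma\in\mathbf{nsA}(m,k,B)$. By Theorem \ref{main_2_two}, $\Sigma\in\mathbf{qcA}(m,k,B)$, and by the equivalence of Theorem \ref{max}(1) there is a state $\phi\in S(C^*(m,k)\otimes_{max}B)$ with $\sigma_{a|x}(\cdot)=\phi(P_{a|x}\otimes\cdot)$ for the generating projectors $P_{a|x}\in C^*(m,k)$. Then
$$p(ab|xy)=\sigma_{a|x}(M_{b|y})=\phi(P_{a|x}\otimes M_{b|y}).$$
To convert this into a $\mathcal{C}_{qc}$-realization, I would apply Boca's theorem (Theorem \ref{boca}) exactly as in the proof of Theorem \ref{max}: there exists a unital completely positive map $\mathcal{E}:C^*(m,k)\to B$ sending the generating projectors $P_{b|y}$ to $M_{b|y}$. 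By the functoriality of the maximal tensor product, $\mathrm{id}\otimes\mathcal{E}$ induces a UCP map $C^*(m,k)\otimes_{max}C^*(m,k)\to C^*(m,k)\otimes_{max}B$, and pulling $\phi$ back along this map gives $\tilde\phi:=\phi\circ(\mathrm{id}\otimes\mathcal{E})\in S(C^*(m,k)\otimes_{max}C^*(m,k))$ satisfying $\tilde\phi(P_{a|x}\otimes P_{b|y})=p(ab|xy)$.

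The final step is to invoke the well-known identification (Fritz, Junge--Navascues--Palazuelos--Perez-Garcia--Scholz--Werner, cited earlier as \cite{Fritz,Junge,Ozawa}) of $\mathcal{C}_{qc}(m,k)$ with the set of correlations of the form $\tilde\phi(P_{a|x}\otimes P_{b|y})$ for $\tilde\phi\in S(C^*(m,k)\otimes_{max}C^*(m,k))$; applying GNS to $\tilde\phi$ yields the required commuting PVM model and vector state for $P$. I expect the main (modest) obstacle to be justifying this last identification cleanly within the paper's conventions: the natural map produces two commuting $*$-homomorphisms $C^*(m,k)\to B(K)$ into the GNS Hilbert space of $\tilde\phi$, whose images are PVMs commuting with one another, and the vector state given by the cyclic vector recovers the desired $\mathcal{C}_{qc}$-form. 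Everything else reduces to routine application of Theorems \ref{max} and \ref{main_2_two}.
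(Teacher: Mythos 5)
Your overall architecture is right and matches what the paper intends (the paper gives no explicit proof, calling the corollary immediate), but the forward direction as written contains a genuine error. You set $B=B(H)$ and define $\sigma_{a|x}(b)=\langle\Psi|P_{a|x}b|\Psi\rangle$, then claim positivity ``from $[P_{a|x},b]=0$'' --- but that commutation relation holds only for $b$ in the algebra generated by the $Q_{b|y}$, not for all of $B(H)$. For a generic positive $b\in B(H)$ the functional $b\mapsto\langle\Psi|P_{a|x}b|\Psi\rangle$ is not even real-valued (take $H=\mathbb{C}^2$, $P_{a|x}$ a rank-one projection, $b$ a rank-one projection not commuting with it), so $\Sigma$ would not consist of positive functionals and would not lie in $\mathbf{nsA}(m,k,B(H))$. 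Your parenthetical fallback ``any unital C*-subalgebra containing all $Q_{b|y}$'' is likewise insufficient; what you need is a unital C*-subalgebra of the commutant $\{P_{a|x}\}'$ containing the $Q_{b|y}$, e.g.\ $B=C^*(\{Q_{b|y}\},\mathds{1})$. With that choice your own positivity computation goes through verbatim and the forward implication is correct (indeed $\Sigma$ is then manifestly a quantum commuting assemblage in the sense of Definition \ref{qc}, hence no-signaling).

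For the backward direction your argument is correct but is a long detour: you pass through Theorem \ref{main_2_two}, Theorem \ref{max}, Boca's theorem, and the identification of $\mathcal{C}_{qc}(m,k)$ with states on $C^*(m,k)\otimes_{max}C^*(m,k)$ (the latter being an external fact from \cite{Fritz,Junge,Ozawa} that the paper cites but never proves). The paper's Theorem \ref{main} is, verbatim, the statement that for any $\Sigma\in\mathbf{nsA}(m,k,B)$ and any POVM elements $N_{b|y}\in B$ the behaviour $\{\sigma_{a|x}(N_{b|y})\}$ lies in $\mathcal{C}_{qc}(m,k)$; citing it finishes this implication in one line, with the commuting PVM model produced explicitly by the GNS construction inside that proof (after the standard dilation of the POVMs to PVMs). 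Your route does buy a slightly more algebraic picture --- a state on $C^*(m,k)\otimes_{max}C^*(m,k)$ --- but it imports machinery the corollary does not need.
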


The above discussion provides a generalization of Theorem \ref{Gisin} and serves as a closure to no-go results concerning the existence of post-quantum (yet no-signaling) steering in the bipartite setting.
For the most general description of a local trusted subsystem allowed by quantum mechanics,
no-signaling constraints already imply the quantum nature of assemblages obtained in a steering
procedure.

We conclude this section with remarks on applications of the preceding description (expressed in terms of no-signaling conditions) of quantum commuting assemblages and correlations. Notably, Corollary \ref{correlations_col} resolves an open problem regarding the image of the set of quantum commuting correlations $\mathcal{C}_{qc}$ under a specific mapping between nonlocal scenarios and prepare-and-measure scenarios, as considered in \cite{Wright}. Furthermore, following the public release of the preliminary version of this manuscript, the same characterization as presented in Theorem \ref{main} has been employed as a tool in establishing an upper bound on the quantum values of so-called compiled nonlocal games \cite{Kulpe}.

\section{Steering inequalities}

We will show that steering inequalities like the one given by formula (\ref{look}) are generic tools enabling us to separate sets of quantum commuting and quantum tensor assemblages in cases where both models do not coincide.

Assume that $\mathbf{qA}(m,k,B)\neq\mathbf{qcA}(m,k,B)$ and choose $\left\{\tilde{\sigma}_{a|x}\right\}_{a,x}\in \mathbf{qcA}(m,k,B)$ such that  $\left\{\tilde{\sigma}_{a|x}\right\}_{a,x}\notin \mathbf{qA}(m,k,B)$. Due to Proposition \ref{Hahn}, according to the Hahn-Banach separation theorem (see for example \cite{Kadison0}) and Proposition 4.44 in \cite{Nerven} there exist a continuous (with respect to weak-* topology) linear functional $f:\left(\bigoplus_{i=1}^{mk} B\right)^*\rightarrow \mathbb{C}$ and a constant $\beta \in \mathbb{R}$ such that
\begin{equation}
\forall_{\left\{\sigma_{a|x}\right\}_{a,x}\in \mathbf{qA}(m,k,B)}\ \  \Re \left(f\left(\left\{\sigma_{a|x}\right\}_{a,x}\right)\right)\leq \beta <  \Re \left(f\left(\left\{\tilde{\sigma}_{a|x}\right\}_{a,x}\right)\right),
\end{equation}where $\Re$ stands for a real part of a complex number. Note (see Proposition 4.43 in \cite{Nerven}) that for arbitrary functional $f$ of this form there exists an element $\left\{f_{a|x}\right\}_{a,x}\in \bigoplus_{i=1}^{mk} B$ such that 
\begin{equation}
f\left(\left\{\varphi_{a|x}\right\}_{a,x}\right)=\sum_{a,x}\varphi_{a|x}(f_{a|x})
\end{equation}for any $\left\{\varphi_{a|x}\right\}_{a,x}\in \left(\bigoplus_{i=1}^{mk} B\right)^*$. Because of that we obtain 
\begin{equation}\label{inequality_thm}
\forall_{\left\{\sigma_{a|x}\right\}_{a,x}\in \mathbf{qA}(m,k,B)}\ \  \sum_{a,x}\sigma_{a|x}(h_{a|x}) \leq \beta<\sum_{a,x}\tilde{\sigma}_{a|x}(h_{a|x})
\end{equation}with hermitian elements $h_{a|x}=\frac{f_{a|x}+f^*_{a|x}}{2}$ for all $x=1,\ldots, m, a=1,\ldots, k$. The following theorem summarizes the above discussion.
\begin{theorem}\label{ineQQ}Let $\tilde{\Sigma}=\left\{\tilde{\sigma}_{a|x}\right\}_{a,x}\in \mathbf{qcA}(m,k,B)$, then $\tilde{\Sigma}\notin \mathbf{qA}(m,k,B)$ if and only if there exist a constant $\beta \in \mathbb{R}$ and hermitian elements $h_{a|x}\in B$ such that (\ref{inequality_thm}) holds.
\end{theorem}

The notion of steering inequality leads to sufficient condition for equivalence of quantum commuting and tensor descriptions.

\begin{proposition}\label{aux}Let $B$ be a unital C*-algebra. Consider arbitrary POVMs elements $N_{b|y}\in B$ with $b=1,\ldots, k+1$ and $y=1,\ldots, m$. Denote by $C^*(\mathcal{N})\subset B$ a C*-algebra generated by $\mathcal{N}=\left\{N_{b|y}\right\}_{b,y}$. If $\mathbf{qA}(m,k,C^*(\mathcal{N}))=\mathbf{qcA}(m,k,C^*(\mathcal{N}))$ for all choices of such POVMs, then $\mathbf{qA}(m,k,B)=\mathbf{qcA}(m,k,B)$.
\end{proposition}
\begin{proof}
Let $\mathbf{qA}(m,k,B)\neq\mathbf{qcA}(m,k,B)$, then according to the above discussion there exist $\left\{\tilde{\sigma}_{a|x}\right\}_{a,x}\in \mathbf{qcA}(m,k,B)$ and hermitian elements
$h_{a|x} \in B$ such that
\begin{equation}\nonumber
\sum_{a,x}\sigma_{a|x}(h_{a|x}) <\sum_{a,x}\tilde{\sigma}_{a|x}(h_{a|x})
\end{equation}for all $\left\{\sigma_{a|x}\right\}_{a,x}\in \mathbf{qA}(m,k,B)$. One can choose constants $\alpha, \beta>0$ in such a way that $\beta(h_{a|x}+\alpha\mathds{1})\geq 0 $ and $\sum_{a=1}^k \beta(h_{a|x}+\alpha\mathds{1})\leq \mathds{1}$ for all $a=1,\ldots, k,x=1,\ldots, m$. Define POVMs elements in $B$ as $N_{b|y}=\beta(h_{b|y}+\alpha\mathds{1})$ for all $b=1,\ldots, k,y=1,\ldots, m$ while $N_{k+1|y}=\mathds{1}-\sum_{b=1}^kN_{b|y}$ for all $y=1,\ldots, m$, and consider $C^*(\mathcal{N})$ with $\mathcal{N}=\left\{N_{b|y}\right\}_{b,y}$. Note as well that by the above construction
\begin{equation}\label{inequality_new}
\sum_{a=1}^k\sum_{x=1}^m\sigma_{a|x}(N_{a|x}) <\sum_{a=1}^k\sum_{x=1}^m\tilde{\sigma}_{a|x}(N_{a|x})
\end{equation}for all $\left\{\sigma_{a|x}\right\}_{a,x}\in \mathbf{qA}(m,k,B)$. Observe that by restricting functionals on $B$ to $C^*(\mathcal{N})$ one can see $\left\{\tilde{\sigma}_{a|x}\right\}_{a,x}$ as an element in $\mathbf{qcA}(m,k,C^*(\mathcal{N}))$. Assume now that $\mathbf{qA}(m,k,C^*(\mathcal{N}))=\mathbf{qcA}(m,k,C^*(\mathcal{N}))$. If so, there exists $\left\{\hat{\sigma}_{a|x}\right\}_{a,x}\in \mathbf{qA}(m,k,B)$ (by extension of states on $A\otimes_{min}C^*(\mathcal{N})$ to states on $A\otimes_{min}B$) such that $\hat{\sigma}_{a|x}(N_{b|y})=\tilde{\sigma}_{a|x}(N_{b|y})$ which contradicts  (\ref{inequality_new}), hence  $\mathbf{qA}(m,k,C^*(\mathcal{N}))\neq \mathbf{qcA}(m,k,C^*(\mathcal{N}))$. Contraposition gives the desired statement.

\end{proof}
Observe that the sufficient condition of Proposition \ref{aux} requires comparison of sets of assemblages for separable C*-algebras regardless of separability of $B$.

We will conclude this section by discussing relation between quantum steering and the notion of nonlocal games. Consider a scenario when the referee asks questions to two separated parties $A$ and $B$. The referee can choose questions $x=1,\ldots, m_A$ and $y=1,\ldots, m_B$ for both parties respectively, according to the given probability of distribution $\pi(x,y)$. Each party returns an answer respectively choosing  $a=1,\ldots,k_A$ and $b=1,\ldots,k_B$. The goal of two parties is to maximize the success probability of answering according to the provided verification function $V:\left\{1,\ldots,k_A\right\}\times \left\{1,\ldots,k_B\right\}\times \left\{1,\ldots,m_A\right\}\times\left\{1,\ldots,m_B\right\}\rightarrow \left\{0,1\right\}$. Any such bipartite nonlocal game $\mathcal{G}$ can be therefore identified with a tuple $(m_A,m_B,k_A,k_B,\pi, V)$. We assume that the parties do not communicate, but they can agree on a certain strategy of answering beforehand (as the rules of the game are known to the parties). 

If both parties share a joint quantum system, one can consider a strategy relating questions and answers to the settings and outcomes of local measurements. In that case a given strategy can be described as a behaviour $P=\left\{p(ab|xy)\right\}_{a,b,x,y}$ obtainable under allowed model of a bipartite quantum system. Optimizing over all strategies permitted within the quantum tensor and the quantum commuting models we obtain the well-known notions of quantum and quantum commuting values of nonlocal games $\omega_t(\mathcal{G})$ given as

$$\omega_{t}(\mathcal{G})=\sup_{P\in \mathcal{C}_t(m_A,m_B,k_A,k_B)}\sum_{x=1}^{m_A}\sum_{y=1}^{m_B}\sum_{a=1}^{k_A}\sum_{b=1}^{k_B}\pi(x,y)V(a,b,x,y)p(ab|xy)$$for $t=q,qc$.

In the case of steering scenarios one can modify the above premise and ask about a value of a nonlocal game in a restricted case, when the second subsystem is fully characterized (described by a given unital C*-algebra $B$) and a related party can only perform a fixed set of measurements $\mathcal{N}=\left\{N_{b|y}\right\}_{b,y}\subset B$ to obtain a bipartite behavior $P=\left\{p(ab|xy)=\sigma_{a|x}(N_{b|y})\right\}_{a,b,x,y}$ defining a chosen strategy. This idea leads to the following notion of a restricted value of nonlocal game $\omega_{t,\mathcal{N}}(\mathcal{G})$ obtainable through optimization over possible assemblages $\Sigma=\left\{\sigma_{a|x}\right\}_{a,x}$ permitted within one of different quantum descriptions of bipartite systems
$$\omega_{t,\mathcal{N}}(\mathcal{G})=\sup_{\Sigma\in \mathbf{tA}(m_A,k_A,B)}\sum_{x=1}^{m_A}\sum_{y=1}^{m_B}\sum_{a=1}^{k_A}\sum_{b=1}^{k_B}\pi(x,y)V(a,b,x,y)\sigma_{a|x}(N_{b|y})$$where $t=q,qc$. With that we can rephrase results concerning steering inequalities within a different language. 
\begin{proposition} 
 $\mathbf{qcA}(m,k,B)\neq \mathbf{qA}(m,k,B)$ if and only if there exists a nonlocal game $\mathcal{G}$ and choice of POVMs elements $N_{b|y}\in B$ such that $\omega_{q,\mathcal{N}}(\mathcal{G})<\omega_{qc,\mathcal{N}}(\mathcal{G})$.
\end{proposition}
\begin{proof}
As one implication is trivial, it is enough to consider the case in which 
$\mathbf{qcA}(m,k,B)\neq \mathbf{qA}(m,k,B)$. If so, then there exists $\tilde{\Sigma}=\left\{\tilde{\sigma}_{a|x}\right\}_{a,x}\in \mathbf{qcA}(m,k,B)$ fulfilling (\ref{inequality_new}). Consider a nonlocal game $\mathcal{G}=(m,m,k,k+1,\pi,V)$ defined by $\pi(x,y)=\frac{1}{m}\delta_{xy}$ and $V(a,b,x,y)$ given as
$$
V(a,b,x,y)=
\begin{cases}
 1 & \text{if }  a=b\\
0  & \text{otherwise} 
\end{cases}.
$$Observe that for $t=q,qc$ and $\mathcal{N}=\left\{N_{b|y}\right\}_{b,y}$ we have 
\begin{align}
\omega_{t,\mathcal{N}}(\mathcal{G})&=\sup_{\Sigma\in \mathbf{tA}(m,k,B)}\sum_{x,y=1}^m\sum_{a=1}^{k}\sum_{b=1}^{k+1}\pi(x,y)V(a,b,x,y)\sigma_{a|x}(N_{b|y})\nonumber \\
&=\sup_{\Sigma\in \mathbf{tA}(m,k,B)}\sum_{x=1}^m\sum_{a=1}^{k}\sigma_{a|x}\left(\sum_{y=1}^m\sum_{b=1}^{k+1}\pi(x,y)V(a,b,x,y)N_{b|y}\right) \nonumber\\
&=\sup_{\Sigma\in \mathbf{tA}(m,k,B)}\frac{1}{m}\sum_{x=1}^m\sum_{a=1}^{k}\sigma_{a|x}(N_{a|x})\nonumber
\end{align}and the statement follows from (\ref{inequality_new}).
\end{proof}
Note that the nonlocal game $\mathcal{G}$ constructed in the previous proof admits a perfect deterministic strategy (i.e. $\omega_{loc}(\mathcal{G})=1$), however, under additional restriction connecting answers of the second party to outcomes of certain fixed measurements, it becomes nontrivial.

\section{Permanence properties}

In what follows we will show that the family of unital C*-algebras for which there is no difference between the quantum commuting model and the quantum tensor model of steering posses certain permanence properties. 

\begin{proposition}Let $\left\{B_i\right\}_{i=1}^n$ be a family of unital C*-algebras, then $\mathbf{qA}(m,k,B_i)=\mathbf{qcA}(m,k,B_i)$ for all $i=1,\ldots, n$ if and only if $\mathbf{qA}(m,k,\oplus_{i=1}^nB_i)=\mathbf{qcA}(m,k,\oplus_{i=1}^nB_i)$.
\end{proposition}
\begin{proof}
The statement follows from the fact that any $\left\{\sigma_{a|x}\right\}_{a,x}\in \mathbf{qcA}(m,k,\oplus_{i=1}^nB_i)$ is given by $\sigma_{a|x}=\oplus_{i=1}^n\lambda_i\sigma^{(i)}_{a|x}$ with $\left\{\sigma^{(i)}_{a|x}\right\}_{a,x}\in \mathbf{qcA}(m,k,B_i)$ and $\sum_{i=1}^n \lambda_i=1$, $\lambda_i\geq 0$ for $i=1,\ldots, n$, and the following property of the minimal tensor product
$$C^*(m,k)\otimes_{min}\left(\oplus_{i=1}^nB_i\right) \simeq \oplus_{i=1}^n C^*(m,k)\otimes_{min} B_i.$$
\end{proof}

\begin{proposition}Let $B$ be a unital C*-algebras with a directed family of unital C*-subalgebras $\left\{B_i\right\}_{i\in I}$ such that $\overline{\cup_{i\in I}B_i}=B$. If $\mathbf{qA}(m,k,B_i)=\mathbf{qcA}(m,k,B_i)$ for all $i \in I$, then $\mathbf{qA}(m,k,B)=\mathbf{qcA}(m,k,B)$.
\end{proposition}
\begin{proof}Assume that $\left\{\tilde{\sigma}_{a|x}\right\}_{a,x}\in \mathbf{qcA}(m,k,B)$ and $\mathbf{qA}(m,k,B)\neq \mathbf{qcA}(m,k,B)$ then according to Theorem \ref{ineQQ} we may consider the steering inequality (\ref{inequality_thm}). Let $\tilde{\beta}=\sum_{a,x}\tilde{\sigma}_{a|x}(h_{a|x})$ and put $0<\epsilon <\frac{\tilde{\beta}-\beta}{2mk}$. There exist self-adjoint elements $h_{a|x}^{(i)}\in B_i$ for some $i\in I$ such that $\norm{h_{a|x}^{(i)}-h_{a|x}}<\epsilon$ for all $a=1,\ldots, k$, $x=1,\ldots,m$. If so, then by the triangle inequality 
\begin{equation}
\forall_{\left\{\sigma_{a|x}\right\}_{a,x}\in \mathbf{qA}(m,k,B_i)}\ \  \sum_{a,x}\sigma_{a|x}(h_{a|x}^{(i)}) < \frac{\beta+\tilde{\beta}}{2}<\sum_{a,x}\tilde{\sigma}_{a|x}(h_{a|x}^{(i)}).
\end{equation}This would imply that $\mathbf{qA}(m,k,B_i)\neq \mathbf{qcA}(m,k,B_i)$ leading to a contradiction.
\end{proof}

\color{black}
\section{No-go results for free probability}

The above discussion leading to equivalence between no-signaling and quantum commuting framework implies, in the light of \cite{Yan}, certain no-go results concerning C*-probability spaces and free independence (see \cite{Nica} for comprehensive introduction to this topic). For the reader convenience we will briefly recall construction of an assemblage discussed in Theorem 2.2 from \cite{Yan}.

Consider a unital C*-algebra $B$ together with a tracial state $\phi$ and a family $u_1,\ldots, u_m\in B$ of self-adjoint unitaries such that $m\geq 5$. Define the following no-signaling assemblage $\Sigma=\left\{\sigma_{a|x}\right\}_{a,x}\in \mathbf{nsA}(m,2,B)$ given by
\begin{equation}\nonumber
\sigma_{a|x}(\cdot)=\phi\left(\frac{\mathds{1}+(-1)^{(a+1)}u_x}{2}\cdot\right).
\end{equation}Note that 
\begin{equation}\label{YanYin}
\sum_{x=1}^m\sigma_{x}(u_x)=m
\end{equation}when $\sigma_x=\sigma_{1|x}-\sigma_{2|x}$. Consider now a quantum commuting realization $\sigma_{a|x}(\cdot)=\varphi(P_{a|x}\otimes \cdot)$ with some $\varphi\in S(C^*(m,2)\otimes_{max}B)$. Assume additionally that unitaries $\tilde{u}_x=(P_{1|x}-P_{2|x})\otimes u_x$ are freely independent with respect to (possibly some other) faithful state $\tilde{\varphi}$ while $\tilde{\varphi}(\tilde{u}_x)=0$ for $x=1,\ldots, m$. Due to inequality (C4) in \cite{Yin} (see also \cite{Collins}) assumed free independence leads to the following upper bound
\begin{equation}\nonumber
\normm{\sum_{x=1}^m\sigma_{x}(u_x)}=\normm{\varphi\Big(\sum_{x=1}^m(P_{1|x}-P_{2|x})\otimes u_x\Big)}\leq \norm{\sum_{x=1}^m\tilde{u}_x }_{max}\leq 2\sqrt{m}
\end{equation}which contradicts formula (\ref{YanYin}). Since all no-signaling assemblages admit quantum commuting realization, this implies that the additional assumption cannot be fulfilled.

Moreover, if unitaries $u_1,\ldots, u_m$ are freely independent with respect to faithful state $\phi$ with $\phi(u_x)=0$, then one can show that unitaries $\tilde{u}_x=(P_{1|x}-P_{2|x})\otimes u_x\in C^*(m,2)\otimes_{min}B$ for $x=1,\ldots, m$ are freely independent with respect to the faithful product state $\tilde{\phi}\otimes \phi$ on $C^*(m,2)\otimes_{min}B$ while $\tilde{\phi}\otimes\phi(\tilde{u}_x)=0$ when $\tilde{\phi}$ stands for some fixed faithful state on $ C^*(m,2)$ (such a state exists since $C^*(m,2)$ is separable) - this once more leads to the contradiction, when quantum commuting and tensor models coincide for a given $B$ and $\phi$ is tracial.

Those observations can be summarized in the form of the following corollaries of Theorem \ref{main_2_two}.

\begin{corollary}Let $(B,\phi)$ be a C*-probability space with $\phi$ being a tracial state and $u_1,\ldots, u_m\in B$ being self-adjoint unitaries with $m\geq 5$. There is no faithful state $\varphi$ on $C^*(m,2)\otimes_{max}B$ such that unitaries $\tilde{u}_x=(P_{1|x}-P_{2|x})\otimes u_x$ for $x=1,\ldots, m$ form a family of freely independent elements with respect to $\varphi$ while $\varphi(\tilde{u}_x)=0$ for all $x=1,\ldots, m$.
\end{corollary}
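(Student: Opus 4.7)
The plan is to argue by contradiction, following the sketch given in the paragraph immediately preceding the statement. The engine is to establish two competing bounds on the norm $\norm{\sum_{x=1}^m\tilde u_x}$ computed in $C^*(m,2)\otimes_{\max}B$: a lower bound of $m$ coming from the explicit combinatorial form of an assemblage built from the $u_x$'s via Theorem \ref{main_2_two}, and an upper bound of $2\sqrt{m}$ coming from the hypothesized centered free independence in a faithful representation.

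First I would introduce the no-signaling assemblage $\Sigma=\{\sigma_{a|x}\}_{a,x}$ on $B$ given by
$$\sigma_{a|x}(\cdot)=\phi\Bigl(\tfrac{\mathds{1}+(-1)^{a+1}u_x}{2}\,\cdot\Bigr),$$
which lies in $\mathbf{nsA}(m,2,B)$ because each $u_x$ is a self-adjoint unitary, so $(\mathds{1}\pm u_x)/2$ are projectors summing to $\mathds{1}$. A direct computation using $u_x^2=\mathds{1}$ gives $\sigma_x(u_x)=\phi(u_x^2)=1$, where $\sigma_x=\sigma_{1|x}-\sigma_{2|x}$, and hence $\sum_{x=1}^m\sigma_x(u_x)=m$. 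Invoking Theorem \ref{main_2_two}, $\Sigma$ admits a quantum commuting realization: there exists $\varphi'\in S(C^*(m,2)\otimes_{\max}B)$ with $\sigma_{a|x}(b)=\varphi'(P_{a|x}\otimes b)$, and in particular $\varphi'(\tilde u_x)=\sigma_x(u_x)=1$. Since $\varphi'$ has norm one, this gives the lower bound
$$\Bigl\|\sum_{x=1}^m\tilde u_x\Bigr\|_{C^*(m,2)\otimes_{\max}B}\geq\Bigl|\varphi'\Bigl(\sum_{x=1}^m\tilde u_x\Bigr)\Bigr|=m.$$

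Next, assuming toward a contradiction that a faithful $\varphi$ with the stated free independence and centering exists, I would apply the Haagerup-type norm estimate (inequality (C4) in \cite{Yin}) in the GNS representation of $\varphi$. Faithfulness of $\varphi$ makes the GNS representation isometric on $C^*(m,2)\otimes_{\max}B$, so the classical bound on a sum of centered free self-adjoint unitaries transfers to the max-tensor-product norm and yields
$$\Bigl\|\sum_{x=1}^m\tilde u_x\Bigr\|_{C^*(m,2)\otimes_{\max}B}\leq 2\sqrt{m}.$$
Combining the two inequalities gives $m\leq 2\sqrt{m}$, which fails precisely for $m\geq 5$.

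The main obstacle is not conceptual but technical: the correct invocation of the free-probability norm estimate. One must verify carefully that the Akemann--Ostrand/Haagerup-type bound $2\sqrt{m}$ for a sum of centered free self-adjoint unitaries, typically stated for the reduced C*-norm in the faithful GNS image, does transfer to the norm on $C^*(m,2)\otimes_{\max}B$ via the isometric embedding provided by faithfulness of $\varphi$. Once this is in hand, the rest is bookkeeping together with a single application of Theorem \ref{main_2_two} to produce the lower bound $m$ on the same norm.
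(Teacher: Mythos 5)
Your proposal is correct and takes essentially the same route as the paper's own argument (the discussion preceding the corollary): the assemblage $\sigma_{a|x}(\cdot)=\phi\bigl(\tfrac{\mathds{1}+(-1)^{a+1}u_x}{2}\,\cdot\bigr)$ with $\sum_x\sigma_x(u_x)=m$, Theorem \ref{main_2_two} to obtain a quantum commuting realization, and the free-probability estimate $\norm{\sum_x\tilde u_x}_{max}\leq 2\sqrt{m}$ from \cite{Yin}. If anything, your bookkeeping is cleaner than the paper's sketch: you correctly keep the realization state $\varphi'$ (which gives the lower bound $m$ on the max-norm, and which necessarily satisfies $\varphi'(\tilde u_x)=1$) separate from the hypothesized faithful free state $\varphi$ (which gives the upper bound), whereas the paper's inequality chain uses a single state for both roles.
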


\begin{corollary} Let $B$ be a unital C*-algebra such that $\mathbf{qA}(m,2,B)=\mathbf{qcA}(m,2,B)$ for some $m\geq 5$. There is no faithful tracial state $\phi$ on $B$ such that there exists a family $u_1,\ldots, u_m\in B$ of self-adjoint unitaries with $\phi(u_x)=0$ for all $x=1,\ldots, m$ that are freely independent with respect to $\phi$.
\end{corollary}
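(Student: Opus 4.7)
The plan is to argue by contradiction, combining the Yan-type assemblage recalled just above with the equality $\mathbf{qcA}=\mathbf{nsA}$ from Theorem \ref{main_2_two}. Assuming, toward contradiction, that a faithful tracial state $\phi\in S(B)$ and self-adjoint unitaries $u_1,\ldots,u_m\in B$ with $\phi(u_x)=0$ and freely independent with respect to $\phi$ do exist, I would introduce the no-signaling assemblage $\sigma_{a|x}(\cdot)=\phi\bigl(\tfrac{\mathds{1}+(-1)^{a+1}u_x}{2}\,\cdot\bigr)$, for which the identity $\sum_{x=1}^m\sigma_x(u_x)=m$ (with $\sigma_x=\sigma_{1|x}-\sigma_{2|x}$) holds by direct computation.

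By Theorem \ref{main_2_two} and the standing hypothesis $\mathbf{qA}(m,2,B)=\mathbf{qcA}(m,2,B)$, this assemblage must then admit a quantum tensor realization: some $\varphi\in S(C^*(m,2)\otimes_{min}B)$ with $\sigma_{a|x}(\cdot)=\varphi(P_{a|x}\otimes\cdot)$. Setting $\tilde{u}_x=(P_{1|x}-P_{2|x})\otimes u_x\in C^*(m,2)\otimes_{min}B$, the identity rearranges to $m=\varphi\bigl(\sum_{x=1}^m\tilde{u}_x\bigr)$, and the problem reduces to producing an upper bound on the operator norm $\bigl\|\sum_x\tilde{u}_x\bigr\|_{min}$ strictly smaller than $m$.

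For that bound I would endow $C^*(m,2)\cong C^*(*_{i=1}^m\mathbb{Z}_2)$ with the canonical tracial state $\tilde{\phi}$ coming from the left regular representation of $*_{i=1}^m\mathbb{Z}_2$, so that the self-adjoint unitaries $U_x=P_{1|x}-P_{2|x}$ are freely independent and centered with respect to $\tilde{\phi}$. The product $\tilde{\phi}\otimes\phi$ extends to a tracial state on $C^*(m,2)\otimes_{min}B$, and a short moment calculation — using factorization of joint moments across the tensor product and the vanishing of alternating moments in each tensor factor — shows that the $\tilde{u}_x$ are freely independent self-adjoint unitaries with mean zero with respect to $\tilde{\phi}\otimes\phi$. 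The Haagerup-type inequality (C4) of \cite{Yin} (see also \cite{Collins}) then yields $\bigl\|\sum_x\tilde{u}_x\bigr\|_{min}\le 2\sqrt{m}$.

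Combining both steps gives $m=|\varphi(\sum_x\tilde{u}_x)|\le 2\sqrt{m}$, which forces $m\le 4$ and contradicts $m\ge 5$. The step I expect to be the main obstacle is the free-independence verification inside $C^*(m,2)\otimes_{min}B$: one must fix the right trace on $C^*(m,2)$, confirm that the product functional extends to the min tensor product with the required tracial and faithfulness properties, and invoke the Haagerup/Pisier norm bound of Yin in exactly this algebraic setting. Once these points are pinned down, the contradiction is immediate.
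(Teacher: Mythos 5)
Your proposal follows the paper's argument essentially verbatim: the Yan-type assemblage with $\sum_{x}\sigma_x(u_x)=m$, the passage from no-signaling to quantum commuting via Theorem \ref{main_2_two} and then to a tensor realization via the hypothesis $\mathbf{qA}(m,2,B)=\mathbf{qcA}(m,2,B)$, the transfer of free independence to $\tilde{u}_x=(P_{1|x}-P_{2|x})\otimes u_x$ with respect to the product state $\tilde{\phi}\otimes\phi$ on $C^*(m,2)\otimes_{min}B$, and inequality (C4) of \cite{Yin} giving $m\le 2\sqrt{m}$, impossible for $m\ge 5$. The step you single out as the main obstacle (fixing the trace on $C^*(m,2)$ and verifying freeness and faithfulness of the product state on the minimal tensor product) is precisely the step the paper also leaves as a one-line assertion, so your write-up matches the paper's own proof in both route and level of detail.
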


Note that in particular the above corollary applies to unital C*-algebras $B$ with the WEP. As a unital C*-algebra $B$ admits a faithful tracial state if and only if it embeds into a type $\mathrm{II}_1$ factor, the above result can be restated in the following way.

\begin{corollary} Let $\mathcal{M}$ stands for a type $\mathrm{II}_1$  factor with a faithful tracial state $\tau$. If $B\subset \mathcal{M}$ is a unital C*-subalgebra such that $\mathbf{qA}(m,2,B)=\mathbf{qcA}(m,2,B)$ for some $m\geq 5$, then there is no family $u_1,\ldots, u_m\in B$ of self-adjoint unitaries with $\tau(u_x)=0$ for all $x=1,\ldots, m$ that are freely independent with respect to $\tau$.
\end{corollary}

\section{Operator system framework}

Note that obtained characterization of classes of assemblages can be equivalently stated in the language of operator systems (see \cite{Paulsen,Kavruk2}), similarly to the case of bipartie correlations \cite{Fritz,Junge, Harris2}. Indeed, define the following operator system 
\begin{equation}\nonumber
\mathcal{F}_{m,k}=\mathrm{span} \left\{P_{a|x}:a=1,\ldots, k, x=1,\ldots, m\right\} \subset C^*(m,k).
\end{equation}We will show that $\mathcal{F}_{m,k}$ plays an analogous role to $C^*(m,k)$ in the C*-algebraic description of steering.

In what follows, for any pair of operator systems $\mathcal{S},\mathcal{T}$, we will consider standard notions of minimal, commuting and maximal tensor products of operator systems denoted respectively by $\mathcal{S}\otimes_{MIN}\mathcal{T}$, $\mathcal{S}\otimes_{C}\mathcal{T}$ and $\mathcal{S}\otimes_{MAX}\mathcal{T}$ (see \cite{Kavruk2} for extensive introduction concerning tensor products of operator systems and their properties). Moreover, let $S(\mathcal{S})$ denotes a set of states on operator system $\mathcal{S}$, i.e. a collection of linear maps $\phi:\mathcal{S}\rightarrow \mathbb{C}$ such that $\phi(\mathcal{S}_ +)\subset \mathbb{R}_+$ and $\phi(e)=1$ where $\mathcal{S}_+\subset \mathcal{S}$ stands for the cone of positive elements and $e$ is the Archimedean matrix order unit related to operator system $\mathcal{S}$ \cite{Kavruk2}. One can show the following theorem.

\begin{theorem}Let $B$ stands for an arbitrary unital C*-algebra. Assemblage $\Sigma=\left\{\sigma_{a|x}\right\}_{a,x}$ on $B$ admits a quantum tensor model if and only if $\sigma_{a|x}(\cdot)=\phi(P_{a|x}\otimes \cdot)$ for some $\phi\in S(\mathcal{F}_{m,k}\otimes_{MIN}B)$ and it admits a quantum commuting model if and only if $\sigma_{a|x}(\cdot)=\phi(P_{a|x}\otimes \cdot)$ for some $\phi\in S(\mathcal{F}_{m,k}\otimes_{MAX}B)$. 
\end{theorem}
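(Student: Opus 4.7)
The plan is to reduce the operator-system-level statement to the C*-algebraic characterizations already provided by Theorems \ref{min} and \ref{max}, using the inclusion $\mathcal{F}_{m,k}\subset C^*(m,k)$ and standard functorial properties of operator system tensor products. On the MIN side, injectivity of $\otimes_{MIN}$ yields a complete order embedding $\mathcal{F}_{m,k}\otimes_{MIN}B\hookrightarrow C^*(m,k)\otimes_{min}B$; on the MAX side, the ucp inclusion $\mathcal{F}_{m,k}\hookrightarrow C^*(m,k)$ induces a ucp map $\mathcal{F}_{m,k}\otimes_{MAX}B\to C^*(m,k)\otimes_{max}B$. The quantum tensor equivalence is then immediate: a state $\phi\in S(\mathcal{F}_{m,k}\otimes_{MIN}B)$ extends by Krein's theorem to a state on $C^*(m,k)\otimes_{min}B$, producing a quantum tensor realization via Theorem \ref{min}, and restriction of any such state to the operator subsystem gives the converse.

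The quantum commuting equivalence is the core step. The easy direction is pulling back any $\tilde\phi\in S(C^*(m,k)\otimes_{max}B)$ along the canonical ucp map to obtain an element of $S(\mathcal{F}_{m,k}\otimes_{MAX}B)$. For the converse, given $\phi\in S(\mathcal{F}_{m,k}\otimes_{MAX}B)$, the Stinespring-type dilation for MAX-states supplies ucp maps $\alpha:\mathcal{F}_{m,k}\to B(H)$ and $\beta:B\to B(H)$ with commuting ranges together with a unit vector $\xi\in H$ such that $\phi(s\otimes b)=\langle\xi,\alpha(s)\beta(b)\xi\rangle$. For each $x$, the span of $\{P_{a|x}\}_a$ inside $\mathcal{F}_{m,k}$ is isomorphic to $C^*(\mathbb{Z}_k)\cong\mathbb{C}^k$; since this algebra is commutative, the restriction of $\alpha$ extends uniquely to a ucp map $\mathcal{E}_x:C^*(\mathbb{Z}_k)\to\beta(B)'$. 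Boca's theorem (Theorem \ref{boca}), applied inductively to $C^*(m,k)\cong C^*(\mathbb{Z}_k)\ast\cdots\ast C^*(\mathbb{Z}_k)$, combines the $\mathcal{E}_x$ into a single ucp map $\mathcal{E}:C^*(m,k)\to\beta(B)'$ with $\mathcal{E}(P_{a|x})=\alpha(P_{a|x})$. Since $\mathcal{E}$ and $\beta$ have commuting ranges, the universal property of $\otimes_{max}$ yields a ucp map $\Pi:C^*(m,k)\otimes_{max}B\to B(H)$, and the vector state $\tilde\phi(\cdot)=\langle\xi,\Pi(\cdot)\xi\rangle\in S(C^*(m,k)\otimes_{max}B)$ reproduces the assemblage, exhibiting $\Sigma$ as a quantum commuting assemblage via Theorem \ref{max}.

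Finally, for $S(\mathcal{F}_{m,k}\otimes_{MAX}B)=S(\mathcal{F}_{m,k}\otimes_C B)$, the inclusion $S(\mathcal{F}_{m,k}\otimes_C B)\subset S(\mathcal{F}_{m,k}\otimes_{MAX}B)$ follows at once from the ucp identity map $\mathcal{F}_{m,k}\otimes_{MAX}B\to\mathcal{F}_{m,k}\otimes_C B$. For the reverse inclusion, the same Boca/Stinespring argument applies verbatim: by the defining universal property of $\otimes_C$, every state on $\mathcal{F}_{m,k}\otimes_C B$ factors through a commuting-range ucp pair, and the construction above exhibits it as a quantum commuting assemblage, hence as an element of $S(\mathcal{F}_{m,k}\otimes_{MAX}B)$. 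The main delicacy in the overall argument is the dilation theorem for MAX-states producing ucp pairs with commuting ranges, a structural fact from the general theory of operator system tensor products; once this is invoked, the Boca extension clearly preserves commutation with $\beta(B)$, since each $\mathcal{E}_x$ is constructed to take values in $\beta(B)'$ and this property is inherited by the amalgamated free-product extension.
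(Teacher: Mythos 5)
Your MIN half matches the paper's proof (complete order embedding of $\mathcal{F}_{m,k}\otimes_{MIN}B$ into $C^*(m,k)\otimes_{min}B$ by injectivity of $\otimes_{MIN}$, then extension and restriction of states, feeding into Theorem \ref{min}). For the commuting half you take a genuinely different route, and it has one soft spot. The paper never dilates a MAX-state: it observes that any $\phi\in S(\mathcal{F}_{m,k}\otimes_{MAX}B)$ already produces a no-signaling assemblage (positivity of each $\sigma_{a|x}$ and $\sum_a\sigma_{a|x}\in S(B)$ follow from positivity and unitality of $\phi$ alone), invokes the collapse $\mathbf{nsA}(m,k,B)=\mathbf{qcA}(m,k,B)$ of Theorem \ref{main_2_two}, and then uses the fact that a linear functional on $\mathcal{F}_{m,k}\otimes_{alg}B$ is determined by its values on the $P_{a|x}\otimes b$ to get both the commuting-model equivalence and $S(\mathcal{F}_{m,k}\otimes_{MAX}B)=S(\mathcal{F}_{m,k}\otimes_{C}B)$. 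You instead rest everything on a ``Stinespring-type dilation for MAX-states,'' i.e.\ the claim that every $\phi\in S(\mathcal{F}_{m,k}\otimes_{MAX}B)$ has the form $\langle\xi,\alpha(s)\beta(b)\xi\rangle$ for ucp maps with commuting ranges. For general operator systems that dilation is precisely the defining property of states on $\otimes_{C}$, not on $\otimes_{MAX}$, so as stated you are assuming the theorem's third assertion rather than proving it, and your final paragraph is circular in presentation (you also start from a state on $\otimes_C$ when the inclusion to be shown is $S(\mathcal{F}_{m,k}\otimes_{MAX}B)\subset S(\mathcal{F}_{m,k}\otimes_{C}B)$, which requires starting from a MAX-state). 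The step is salvageable because $B$ is a unital C*-algebra and $\mathcal{S}\otimes_{MAX}\mathcal{A}=\mathcal{S}\otimes_{C}\mathcal{A}$ whenever $\mathcal{A}$ is a unital C*-algebra (Theorem 6.7 in \cite{Kavruk2}); with that citation in place your Boca argument --- which closely mirrors the proofs of Theorems \ref{max} and \ref{main_2_two} --- goes through, and the last claim becomes immediate without repeating the construction. The trade-off between the two routes: yours imports the nontrivial operator-system identity $\otimes_{MAX}=\otimes_{C}$ over a C*-algebra as an external black box, while the paper derives the commuting equivalence and the state-level coincidence of MAX and C internally, as corollaries of its own no-signaling collapse (Theorem \ref{main_2_two}), so the latter presents $S(\mathcal{F}_{m,k}\otimes_{MAX}B)=S(\mathcal{F}_{m,k}\otimes_{C}B)$ as an output of the steering analysis rather than an input to it.
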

\begin{proof}
According to Theorem 4.4 in \cite{Kavruk2}, one can write $\mathcal{F}_{m,k}\otimes_{MIN}B\subset C^*(m,k)\otimes_{min} B$ on the level of operator system structure. Because of that, any state $\phi \in S(C^*(m,k)\otimes_{min} B)$ can be restricted to the state $\phi \in S(\mathcal{F}_{m,k}\otimes_{MIN}B)$. Conversely, due to Arveson theorem (see Theorem 7.5 in \cite{Paulsen}), any state $\phi \in S(\mathcal{F}_{m,k}\otimes_{MIN}B)$ can be extended to the state on the minimal tensor products of C*-algebras $C^*(m,k)\otimes_{min} B$, hence the first equivalence is proved.

To show the second one, observe that 
\begin{equation}\nonumber
 S(C^*(m,k)\otimes_{max}B)   \subset  S(\mathcal{F}_{m,k}\otimes_{C}B) \subset  S(\mathcal{F}_{m,k}\otimes_{MAX}B)
\end{equation}where we identify states on $C^*(m,k)\otimes_{max}B$ with their restrictions to $C^*(m,k)\otimes_{alg}B$. On the other hand, if $\sigma_{a|x}(\cdot)=\phi(P_{a|x}\otimes \cdot)$ for some $\phi \in S(\mathcal{F}_{m,k}\otimes_{MAX}B)$, then clearly $\Sigma=\left\{\sigma_{a|x}\right\}_{a,x}\in \mathbf{nsA}(m,k,B)$. Since $\mathbf{nsA}(m,k,B)=\mathbf{qcA}(m,k,B)$ by Theorem \ref{main_2_two}, we obtain the second equivalence. 
\end{proof}

The above theorem leads to the immediate corollary.

\begin{corollary}\label{92}For any unital C*-algebra $B$ and any $m,k$, $\mathbf{qA}(m,k,B)=\mathbf{qcA}(m,k,B)$ if and only if $S(\mathcal{F}_{m,k}\otimes_{MIN}B)=S(\mathcal{F}_{m,k}\otimes_{MAX}B)$.

\end{corollary}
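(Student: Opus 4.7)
The plan is to invoke the preceding theorem, which translates the question into comparing two collections of states on the same algebraic tensor product $\mathcal{F}_{m,k}\otimes B$ endowed with different positive cones. By that theorem, the classes $\mathbf{qA}(m,k,B)$ and $\mathbf{qcA}(m,k,B)$ arise as the images of $S(\mathcal{F}_{m,k}\otimes_{MIN}B)$ and $S(\mathcal{F}_{m,k}\otimes_{MAX}B)$ under the common evaluation map $\phi \mapsto \{\phi(P_{a|x}\otimes\,\cdot\,)\}_{a,x}$. The forward implication is then immediate: equality of the two state sets forces equality of their images under this map.

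For the converse, I would first record the general inclusion $S(\mathcal{F}_{m,k}\otimes_{MAX}B) \subset S(\mathcal{F}_{m,k}\otimes_{MIN}B)$, which holds without any hypothesis because the MIN positive cone is contained in the MAX positive cone, so any MAX-positive functional is automatically MIN-positive. For the reverse inclusion, I would pick an arbitrary $\phi \in S(\mathcal{F}_{m,k}\otimes_{MIN}B)$ and form the assemblage $\{\phi(P_{a|x}\otimes\,\cdot\,)\}_{a,x}$, which lies in $\mathbf{qA}(m,k,B)$ by the preceding theorem, and hence in $\mathbf{qcA}(m,k,B)$ by hypothesis. Applying the theorem in the other direction yields a state $\psi \in S(\mathcal{F}_{m,k}\otimes_{MAX}B)$ with $\psi(P_{a|x}\otimes b) = \phi(P_{a|x}\otimes b)$ for all $a,x$ and every $b\in B$.

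The closing step is to exploit $\mathcal{F}_{m,k} = \mathrm{span}\{P_{a|x}\}$, so that the algebraic tensor product $\mathcal{F}_{m,k}\otimes B$ is the linear span of the elementary tensors $P_{a|x}\otimes b$; this forces $\phi$ and $\psi$ to coincide as linear functionals on $\mathcal{F}_{m,k}\otimes B$, and so $\phi$ inherits MAX-positivity from $\psi$ and lies in $S(\mathcal{F}_{m,k}\otimes_{MAX}B)$. The main subtlety, rather than a genuine obstacle, is correctly tracking the direction of the inclusion between the MIN and MAX positive cones (and consequently between their state spaces) and confirming that a state on an operator system tensor product is determined by its restriction to the algebraic tensor product; both are standard facts from operator system theory, so the argument reduces to a purely formal bookkeeping built on top of the preceding theorem.
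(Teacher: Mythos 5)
Your overall strategy --- use the preceding theorem to identify $\mathbf{qA}$ and $\mathbf{qcA}$ as the images of the two state spaces under the evaluation map $\phi\mapsto\{\phi(P_{a|x}\otimes\,\cdot\,)\}_{a,x}$, and then use the fact that $\mathcal{F}_{m,k}\otimes_{alg}B$ is spanned by the elementary tensors $P_{a|x}\otimes b$ to upgrade agreement of assemblages to equality of functionals --- is exactly the intended argument, and the paper indeed leaves it as ``immediate''. However, there is a genuine error in your converse direction: you have the canonical inclusion between the two state spaces backwards. In the Kavruk--Paulsen--Todorov--Tomforde ordering the MAX tensor product has the \emph{smallest} positive cones and MIN the \emph{largest} (the identity map $\mathcal{S}\otimes_{MAX}\mathcal{T}\to\mathcal{S}\otimes_{MIN}\mathcal{T}$ is the completely positive one), so a MIN-state, being positive on the larger cone, is automatically a MAX-state. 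The inclusion that holds for free is therefore $S(\mathcal{F}_{m,k}\otimes_{MIN}B)\subset S(\mathcal{F}_{m,k}\otimes_{MAX}B)$ --- consistent with the paper's chain $S(C^*(m,k)\otimes_{max}B)\subset S(\mathcal{F}_{m,k}\otimes_{C}B)\subset S(\mathcal{F}_{m,k}\otimes_{MAX}B)$ and with the unconditional inclusion $\mathbf{qA}\subset\mathbf{qcA}$. You declare the opposite inclusion $S(MAX)\subset S(MIN)$ to be trivial on the strength of a reversed cone containment, and then spend the hypothesis proving $S(MIN)\subset S(MAX)$, which needs no hypothesis at all. As written, the substantive half of the equivalence is justified by a false statement.

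The repair is mechanical and uses exactly the machinery you set up, just in the transposed direction: given $\psi\in S(\mathcal{F}_{m,k}\otimes_{MAX}B)$, the assemblage $\{\psi(P_{a|x}\otimes\,\cdot\,)\}_{a,x}$ lies in $\mathbf{qcA}(m,k,B)$ by the theorem, hence in $\mathbf{qA}(m,k,B)$ by hypothesis, so some $\phi\in S(\mathcal{F}_{m,k}\otimes_{MIN}B)$ agrees with $\psi$ on all $P_{a|x}\otimes b$; the spanning argument then gives $\psi=\phi$, so $\psi$ is MIN-positive. With the roles of MIN and MAX corrected in this way, your proof coincides with the paper's intended one.
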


Observe that since $\mathcal{F}_{m,k}\otimes_{MAX}B=\mathcal{F}_{m,k}\otimes_{C}B$ for any unital C*-algebra $B$ (see Theorem 6.7 in \cite{Kavruk2}) in fact one may not distinguish between maximal and commuting tensor products of operator systems in the previous discussion.

Applying Corollary \ref{92} together with Theorem \ref{mat_stab} one can in particular reconstruct equivalence between conditions $(1)$ and $(3)$ of Theorem $4.3$ in \cite{Farenick} and obtain an alternative condition for the WEP.
\begin{corollary}Let $(m,k)\neq (2,2)$. A unital C*-algebra $B$ has the WEP if and only $\mathcal{F}_{m,k}\otimes_{MIN}B=\mathcal{F}_{m,k}\otimes_{MAX}B$.
\end{corollary}
\color{black}

\section{Connection to Tsirelson's problem}

Discussed scenarios of steering in a C*-algebraic framework not only are inspired by the original Tsirelson's problem for correlations, but they may provide a direct attempt at disproving this conjecture as well (without a need for computational framework of \cite{Re}).
To see this, recall that a unital C*-algebra $A$ admits the local lifting property (LLP) if for any unital C*-algebra B with a closed
two-sided ideal $I\subset B$ any restriction $\mathcal{E}|_\mathcal{S}$ of a unital completely positive map $\mathcal{E} :A\rightarrow B/I$ to arbitrary finite-dimensional operator system $\mathcal{S}\subset A$ lifts to the unital
completely positive map $\tilde{\mathcal{E}}|_\mathcal{S} :\mathcal{S}\rightarrow B$, i.e. $\mathcal{E}|_\mathcal{S} = q \circ \tilde{\mathcal{E}}|_\mathcal{S}$ where $q : B \rightarrow B/I$ stands for the quotient map \cite{Kirchberg} 
- compare this with a stronger notion of the lifting property (LP) evoked in the proof of Proposition \ref{WEP}.

Let $G$ be a discrete group such that $\mathbb{F}_2\subset G$ and  $C^*(G)$ has the local lifting property. According to Proposition 3.8 in \cite{Farenick}, Tsirelson's conjecture is true if and only if $C^*(G)$ has the WEP (by a reasoning similar to the one in Proposition \ref{WEP}). Therefore, taking into account Proposition \ref{WEP} and Theorem \ref{mat_stab} we obtain an immediate corollary.

\begin{corollary}\label{last_Tsirelson}Let $G$ be a discrete group such that $\mathbb{F}_2\subset G$ and $C^*(G)$ admits a local lifting property. Tsirelson's conjecture is false if and only if there exists $n\in \mathbb{N}$ such that $\mathbf{qA}(m,k,M_n(C^*(G)))\neq \mathbf{qcA}(m,k,M_n(C^*(G)))$ for some fixed $(m,k)\neq (2,2)$.
\end{corollary}

\color{black}

This observation provides a possible road towards alternative refutation of Tsirelson's conjecture though a construction of a steering inequality separating quantum commuting and tensor models for some $M_n(C^*(G))$ with $G$ as above. Note that beside discussed examples $*_{i=1}^m\mathbb{Z}_k$ (for $(m,k)\neq(2,2)$) also $SL_2(\mathbb{Z})$ fulfills assumptions of Corollary \ref{last_Tsirelson} \cite{Farenick}.

\section*{Acknowledgements}
This work was initially supported by the Foundation for Polish Science (IRAP project, ICTQT, contract no. MAB/2018/5, co-financed by EU within Smart Growth Operational Programme) and then by the National Science Centre, Poland through grant Maestro (2021/42/A/ST2/00356). The author acknowledge the support of Germany’s Excellence Strategy -- Cluster of Excellence Matter and Light for Quantum Computing (ML4Q), EXC 2004/1 (390534769). The author would like to thank Marcin Marciniak and Paweł Horodecki for stimulating discussions. The author is also grateful to Yin Zhi and Alexander Frei for conversation on the preprint.

\end{document}